\renewcommand{\paragraph}{\roman{paragraph}}
\renewcommand\title[1]{\gdef\@title{\reset@font\Large\bfseries #1}}
\renewcommand\section{\@startsection {section}{1}{\z@}%
                                   {-3.5ex \@plus -1ex \@minus -.2ex}%
                                   {2.3ex \@plus.2ex}%
                                   {\normalfont\large\bfseries}}
\renewcommand\subsection{\@startsection{subsection}{2}{\z@}%
                                     {-3ex\@plus -1ex \@minus -.2ex}%
                                     {1.5ex \@plus .2ex}%
                                     {\normalfont\normalsize\bfseries}}
\renewcommand\subsubsection{\@startsection{subsubsection}{3}{\z@}%
                                     {-2.5ex\@plus -1ex \@minus -.2ex}%
                                     {1.5ex \@plus .2ex}%
                                     {\normalfont\normalsize\bfseries}}
\def\@runningauthor{}\newcommand{\runningauthor}[1]{\def\runningauthor{#1}}
\def\@runningtitle{}\newcommand{\runningtitle}[1]{\def\runningtitle{#1}}
\renewcommand{\ps@plain}{%
\renewcommand{\@evenhead}{\footnotesize\scshape \hfill\runningauthor\hfill}
\renewcommand{\@oddhead}{\footnotesize\scshape \hfill\runningtitle\hfill}}
\newcommand{\F}{\mathbb{F}}
\g@addto@macro\bfseries{\boldmath}
\theoremstyle{plain}
\newtheorem{theorem}{Theorem}[section]
\newtheorem{lem}[theorem]{Lemma}
\newtheorem{cor}[theorem]{Corollary}
\newtheorem{prop}[theorem]{Proposition}
\theoremstyle{definition}
\newtheorem{example}[theorem]{Example}
\newtheorem{conjecture}[theorem]{Conjecture}
\theoremstyle{remark}
\newtheorem{remark}[theorem]{Remark}
\runningauthor{}
\date{}
\begin{document}
\begin{sloppypar}

\title{An open problem and a conjecture on binary linear complementary pairs of codes\thanks{The research of Shitao Li and Minjia Shi is supported by the National Natural Science Foundation of China under Grant 12071001. The research of San Ling is supported by Nanyang Technological University Research Grant 04INS000047C230GRT01.}}
\author{Shitao Li, Minjia Shi\thanks{Shitao Li and Minjia Shi are with the Key Laboratory of Intelligent Computing and Signal Processing, Ministry of Education, School of Mathematical Sciences, Anhui University, Hefei, 230601, China. They are also with the State Key Laboratory of Integrated Services Networks, Xidian University, Xi'an, 710071, China (email: lishitao0216@163.com, smjwcl.good@163.com).}, San Ling\thanks{San Ling is with the School of Physical and Mathematical Sciences, Nanyang Technological University, Singapore 637371, (email: lingsan@ntu.edu.sg).}}

\maketitle

\begin{abstract}
The existence of $q$-ary linear complementary pairs (LCPs) of codes with $q> 2$ has been completely characterized so far.
This paper gives a characterization for the existence of binary LCPs of codes. As a result, we solve an open problem proposed by Carlet $et~al.$ (IEEE Trans. Inf. Theory 65(3): 1694-1704, 2019) and a conjecture proposed by Choi $et~al.$ (Cryptogr. Commun. 15(2): 469-486, 2023).
\end{abstract}
{\bf Keywords:} Linear complementary pair of code, linear $\ell$-intersection pair, the security parameter \\
\noindent{\bf Mathematics Subject Classification} 94B05 15B05 12E10

\section{Introduction}
Let $q$ be a prime power and let $\F_q$ denote the finite field with $q$ elements. The {\em (Hamming) weight} ${\rm wt}({\bf x})$ of ${\bf x}=(x_1,x_2,\ldots,x_n)\in \F_q^n$ is defined by ${\rm wt}({\bf x})=|\{i~|~x_i\neq 0,~1\leq i\leq n\}|$. A {\em linear $[n,k,d]_q$ code} $C$ is a $k$-dimensional linear subspace of $\F_q^n$, where $d$ is the minimum weight of all nonzero codewords of $C$.
The {\em (Euclidean) dual} code $C^{\perp}$ of a linear $[n,k]_{q}$ code $C$ is defined by
$C^{\perp}=\{\textbf y\in \F_{q}^{n}~|~\langle \textbf x, \textbf y\rangle=0~ {\rm for\ all}\ \textbf x\in C \},$
where $\langle \textbf x, \textbf y\rangle=\sum_{i=1}^n x_iy_i$ for ${\bf x} = (x_1, x_2, \ldots, x_n)$ and $\textbf y = (y_1, y_2, \ldots, y_n)\in \F_{q}^{n}$.
A pair of linear codes $(C_1, C_2)$ of length $n$ is called a {\em linear $\ell$-intersection pair} of codes if $\dim(C_1 \cap C_2)=\ell$. This notion was introduced by Guenda $et~al.$ \cite{2020DCC-lIP}. In the special case when $\dim(C_1)+\dim(C_2)=n$ and $\ell=0$, $(C_1, C_2)$ is called a {\em linear complementary pair} (LCP) of codes. This notion was introduced by Carlet $et~al.$ \cite{LCP-IT-2018}. If $\dim(C_1)=k$, then $(C_1, C_2)$ is called an $[n,k]$ LCP of codes. In the more special case when $(C,C^\perp)$ is an LCP, then $C$ is called {\em linear complementary dual} (LCD). This notion was introduced by Massey \cite{LCD-Massey} in order to provide an optimum linear coding solution for the two-user binary adder channel.

LCPs of codes have been used in the framework of direct sum masking, which was proposed as a countermeasure against side channel attacks (SCAs) and fault injection attacks (FIAs) \cite{lcd-appl,LCP-Appl-1}. If an LCP of codes $(C_1, C_2)$ is used in this framework, then the minimum distance $d(C_1)$ of $C_1$ measures the protection against FIAs, whereas the minimum distance $d(C_2^\perp)$ of the dual code of $C_2$ measures the protection against SCAs. It has been shown that the level of resistance against both SCA and FIA depends on $\min\{d(C_1),d(C_2^\perp)\}$, which is referred to as the security parameter of the LCP of codes. Note that for an LCD code $C$, $(C,C^\perp)$ is an LCP of codes. Then the security parameter of $(C,C^\perp)$ is simply $d(C)$, which has been studied extensively (see \cite{AH-TLCD-1-10,AH-BLCD-17-24,AHS-BLCD,ter-11-19,LCD-lp,2-LCD-30,HS-BLCD-1-16,2-LCD-40,B-LCD-40,bound-12,234-lcD,LCD-equivalent,IT-Jin}). Hence we can define the optimal security parameter of $[n,k]$ LCPs of codes for given $n$ and $k$ as follows:
$$d_{LCP}(n,k)=\max\{\min\{d(C_1),d(C_2^\perp)\}~|~(C_1,C_2)~{\rm is~an}~[n,k]_q~{\rm LCP~of~codes}\}.$$
Let $d_L(n, k)$ denote the largest minimum distance among linear $[n, k]_q$ codes. Then we clearly have $d_{LCP}(n,k)\leq d_L(n,k).$

Carlet $et~al.$ \cite{Sigam-LCD} showed that for any $q$-ary linear code $C$ with $q>2$, there is a monomial transformation $\sigma$ on $\F_q^n$ such that $(C,(\sigma(C))^\perp)$ is an LCP of codes. They proposed an open problem for the existence of $q$-ary LCPs of codes: is there a monomial transformation $\sigma$ on $\F_q^n$ such that $(C_1,(\sigma(C_2))^\perp)$ is an LCP of codes for any two $q$-ary linear $[n,k]$ codes $C_1$ and $C_2$ with $q>2$? This open problem has been settled by Anderson $et~al.$ \cite{Relative-hulls}. This means that the problems of the existence and the optimal security parameters of $q$-ary LCPs of codes have been settled for $q>2$.
For the binary case, the same article \cite{Sigam-LCD} showed that for any binary linear code $C$, there is a linear transformation $\sigma$ on $\F_2^{n+1}$ such that $(\{0\}\times C,(\sigma(\{0\}\times C))^\perp)$ is an LCP of codes. They also proposed a similar open problem for the existence of binary LCPs of codes.
Moreover, their result also yields that $d_{LCP}(n,k)\geq d_L(n,k)-1$. Hence an interesting topic is to study the best security parameter problem for binary LCPs of codes. For this topic, Choi $et~al.$ \cite{2023CCDS-LCP} determined the best security parameters of binary $[n,k]$ LCPs of codes for $n\leq 18$ or $k\leq 4$. Further, they also described a sufficient condition for $d_{LCP}(n,k)= d_L(n,k)-1$, and gave a conjecture on its necessary condition. Very recently, G${\rm\ddot{u}}$neri \cite{2023ITSSLCP} constructed an infinite family of optimal binary LCPs of codes from Solomon-Stiffler codes.

In this paper, we study the problems of the existence and the optimal security parameters of binary LCPs of codes. We show that for two binary linear $[n,k]$ codes $C_1$ and $C_2$, there is a linear transformation $\sigma$ on $\F_2^{n+1}$ such that $(\{0\}\times C_1,(\sigma(\{0\}\times C_2))^\perp)$ is an LCP of codes. This solves an open problem proposed by Carlet $et~al.$ \cite{Sigam-LCD}. Further, we present a more general sufficient condition for $d_{LCP}(n,k)= d_L(n,k)-1$ than that given by Choi $et~al.$ \cite{2023CCDS-LCP}, and show a conjecture proposed in the same paper. As a result, we construct an infinite family of optimal binary LCPs of codes from Solomon-Stiffler codes, whose parameters cover those constructed by G${\rm\ddot{u}}$neri \cite{2023ITSSLCP}.
Further, these results also imply that the conjecture proposed by Guenda $et~al.$ \cite{2020DCC-lIP} is not valid for the binary case.

The paper is organized as follows. The next section gives some notations and preliminaries. In Section 3, we show that there is a linear transformation $\sigma$ on $\F_2^{n+1}$ such that $(\{0\}\times C_1,(\sigma(\{0\}\times C_2))^\perp)$ is an LCP of codes for two binary linear $[n,k]$ codes $C_1$ and $C_2$. In Section 4, we present a characterization for $d_{LCP}(n,k)=d_L(n,k)-1$. In Section 5, we conclude the paper.

\section{Preliminaries}
Throughout this paper, let $O$ denote an appropriate zero matrix and let ${\bf0}$ denote an appropriate zero row (resp. column) vector.
A {\em generator matrix} of a binary linear $[n, k]$ code $C$ is any $k\times n$ matrix $G$ whose rows form a basis of $C$. A {\em parity check matrix} of a binary linear $[n, k]$ code $C$ is any generator matrix of $C^\perp$. A binary vector ${\bf x} = (x_1, x_2, \ldots , x_n)$ is {\em even-like} if $\sum_{i=1}^nx_i=0$ and is {\em odd-like} otherwise. A binary linear code is said to be {\em even-like} if it has only even-like codewords, and is said to be {\em odd-like} if it
is not even-like. The Griesmer bound \cite{Huffman} is defined on a binary linear $[n,k,d]$ code $C$ as
$$n\geq g(k,d):=\sum_{i=0}^{k-1}\left\lceil \frac{d}{2^i}\right\rceil,$$
 where $\lceil a \rceil$ is the least integer greater than or equal to the real number $a$.
A binary linear $[n, k, d]$ code $C$ is said to be a {\em Griesmer code} if $n$ meets the Griesmer bound, $i.e.$, $n=g(k,d)$. Assume that $S_k$ is a matrix whose columns are all the nonzero vectors in $\F_2^k$.
It is well-known that $S_k$ generates a binary simplex code, which is a one-weight $[2^k-1,k,2^{k-1}]$ Griesmer code for $k\geq 3$ (see \cite{Huffman}).
Consider the following matrix
$$R(1,k)=\begin{pmatrix}
    1 & 1\cdots1  \\
    {\bf0} & S_k
  \end{pmatrix},$$
which generates the {\em first order Reed-Muller code} $\mathcal{R}(1,k)$ (see \cite{Huffman}).
It can be checked that $\mathcal{R}(1,k)$ is a binary linear $[2^k,k+1,2^{k-1}]$ code containing the all-one vector ${\bf 1}$.

A {\em permutation} $\pi$ is an arrangement of $n$ elements without repetition, and the corresponding permutation matrix is denoted by $P_{\pi}$.
The {\em permutation group} {\bf PAut}$(\F_2^n)$ of $\F_2^n$ is the set consisting of all permutations. Let $\pi\in {\rm {\bf PAut}}(\F_2^n)$ and let $G$ be a $k\times n$ matrix. We use $\pi(G)$ to denote the $k\times n$ matrix with its $i$th row $\pi(G(i,:))$, where $G(i,:)$ is the $i$th row of $G$ for $1\leq i\leq k.$
For a $k\times n$ matrix $G$, we use ${\rm Row}_i(G)$ and ${\rm Col}_j(G)$ to denote the $i$th row and $j$th column of $G$, respectively.
Let $A$ and $B$ be two $k\times n$ matrices. Let $\pi\in {\rm {\bf PAut}}(\F_2^{n})$ be the permutation that transposes positions $i$ and $j$. Then it can be checked that
$$\pi(A)B=AP_{\pi}B=AB+({\rm Col}_j(A)-{\rm Col}_i(A))({\rm Row}_i(B)-{\rm Row}_j(B)).$$

Let $C$ be a binary linear $[n,k]$ code and let $\pi\in {\rm {\bf PAut}}(\F_2^n)$. Then it can be checked that $(\pi(C))^\perp=\pi(C^\perp)$ by a simple argument.

\section{An open problem on binary LCPs of codes}
In this section, we present a characterization for the existence of binary LCPs of codes. As a result, we solve an open problem proposed by Carlet $et~al.$ \cite{Sigam-LCD}. First, we recall a useful proposition.

\begin{prop}{\rm\cite[Theorem 2.1]{2020DCC-lIP}}\label{prop-LIP-rank}
For $i=1,2$, let $C_i$ be a linear $[n,k_i]_q$ code with generator matrix $G_i$ and parity check matrix $H_i$. If $C_1$ and $C_2$ form an $\ell$-intersection pair, then
$${\rm rank}(G_1H_2^T)={\rm rank}(H_2 G_1^T)=k_1-\ell~{\rm and}~{\rm rank}(G_2H_1^T)={\rm rank}(H_1 G_2^T)=k_2-\ell.$$
\end{prop}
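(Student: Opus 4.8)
The plan is to read off $G_1H_2^{T}$ as the matrix of a linear map whose kernel is a copy of $C_1\cap C_2$, and then apply rank--nullity. Concretely, consider the $\F_q$-linear map $\varphi\colon\F_q^{k_1}\to\F_q^{\,n-k_2}$ defined by $\varphi(\A)=\A\,G_1H_2^{T}$. The first step is the key observation that, since $H_2$ is a parity check matrix of $C_2$ (equivalently, its rows span $C_2^{\perp}$ and $C_2=(C_2^{\perp})^{\perp}$), a vector $\vvv\in\F_q^{n}$ lies in $C_2$ if and only if $\vvv H_2^{T}={\bf0}$. Applying this with $\vvv=\A G_1$, and noting that $\A G_1$ always lies in $C_1$, we get that $\varphi(\A)={\bf0}$ if and only if $\A G_1\in C_1\cap C_2$.

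Next I would use that $G_1$ has full row rank $k_1$, so $\A\mapsto\A G_1$ is an injection $\F_q^{k_1}\hookrightarrow C_1$; hence it restricts to an isomorphism from $\ker\varphi$ onto $C_1\cap C_2$, giving $\dim\ker\varphi=\dim(C_1\cap C_2)=\ell$. Rank--nullity then yields ${\rm rank}(G_1H_2^{T})={\rm rank}(\varphi)=k_1-\dim\ker\varphi=k_1-\ell$, and since a matrix and its transpose have the same rank, ${\rm rank}(H_2G_1^{T})={\rm rank}\big((G_1H_2^{T})^{T}\big)=k_1-\ell$ as well. The remaining two equalities ${\rm rank}(G_2H_1^{T})={\rm rank}(H_1G_2^{T})=k_2-\ell$ follow by the identical argument with the roles of $C_1$ and $C_2$ interchanged, using that $\dim(C_1\cap C_2)$ is symmetric in the two codes.

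I do not expect a genuine obstacle here: the only point that needs care is the membership criterion $\vvv\in C_2\iff \vvv H_2^{T}={\bf0}$ (i.e.\ that the parity check matrix exactly detects the code), together with the remark that $G_1$ has full row rank, which is what guarantees $\dim\ker\varphi$ equals $\ell$ rather than merely being bounded by it. Everything else is rank--nullity and the transpose-invariance of rank.
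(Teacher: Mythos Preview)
Your argument is correct: the kernel computation via the parity-check criterion $\vvv\in C_2\iff \vvv H_2^{T}={\bf0}$, together with the injectivity of $\A\mapsto\A G_1$, cleanly identifies $\ker\varphi$ with $C_1\cap C_2$, and rank--nullity finishes it. Note, however, that the paper does not supply its own proof of this proposition; it simply quotes it as \cite[Theorem~2.1]{2020DCC-lIP}, so there is no in-paper argument to compare against. Your proof is exactly the standard one and would serve perfectly well as a self-contained justification.
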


The following corollary can also be derived from {\rm\cite[Proposition 2.2]{Relative-hulls}}.

\begin{cor}\label{cor-rank}
For $i=1,2$, let $C_i$ be a binary linear $[n,k]$ code. Then $\dim(C_1\cap C_2^{\perp})=\dim(C_2\cap C_1^{\perp})$.
\end{cor}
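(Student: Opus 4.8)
The plan is to express each of the two dimensions as the co-rank of the \emph{same} matrix (up to transposition), via Proposition~\ref{prop-LIP-rank}. Fix generator matrices $G_1,G_2$ of $C_1,C_2$ respectively. Since $(C_i^\perp)^\perp=C_i$, the matrix $G_2$ is a parity check matrix of $C_2^\perp$ and $G_1$ is a parity check matrix of $C_1^\perp$.

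Now view the pair $(C_1,C_2^\perp)$ as a linear $\ell_1$-intersection pair with $\ell_1=\dim(C_1\cap C_2^\perp)$, where $C_1$ is presented by the generator matrix $G_1$ and $C_2^\perp$ by the parity check matrix $G_2$. Proposition~\ref{prop-LIP-rank} then gives ${\rm rank}(G_1G_2^T)=k-\ell_1$. Applying the same reasoning to the pair $(C_2,C_1^\perp)$, with $\ell_2=\dim(C_2\cap C_1^\perp)$, $C_2$ presented by $G_2$ and $C_1^\perp$ by the parity check matrix $G_1$, yields ${\rm rank}(G_2G_1^T)=k-\ell_2$. Since $G_2G_1^T=(G_1G_2^T)^T$, the two ranks coincide, hence $k-\ell_1=k-\ell_2$, i.e. $\ell_1=\ell_2$, which is the claim.

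I do not foresee a genuine obstacle here; the only point needing care is to match correctly the roles of ``generator matrix'' and ``parity check matrix'' in Proposition~\ref{prop-LIP-rank} when one of the two codes in the pair is itself a dual code. If a self-contained argument is preferred, one can bypass Proposition~\ref{prop-LIP-rank} altogether: writing a codeword of $C_1$ as $mG_1$ with $m\in\F_2^k$, membership in $C_2^\perp$ is exactly the condition $mG_1G_2^T=\mathbf{0}$, and since $m\mapsto mG_1$ is injective ($G_1$ has rank $k$) one gets $\dim(C_1\cap C_2^\perp)=k-{\rm rank}(G_1G_2^T)$ directly, and symmetrically $\dim(C_2\cap C_1^\perp)=k-{\rm rank}(G_2G_1^T)$; the conclusion then follows from ${\rm rank}(G_1G_2^T)={\rm rank}(G_2G_1^T)$ as before.
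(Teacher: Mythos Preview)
Your proof is correct and matches the paper's own argument essentially verbatim: the paper also fixes generator matrices $G_1,G_2$, invokes Proposition~\ref{prop-LIP-rank} to get $\dim(C_1\cap C_2^\perp)=k-{\rm rank}(G_1G_2^T)$ and $\dim(C_2\cap C_1^\perp)=k-{\rm rank}(G_2G_1^T)$, and concludes via ${\rm rank}(G_1G_2^T)={\rm rank}(G_2G_1^T)$. Your additional remark explaining why $G_i$ serves as a parity check matrix of $C_i^\perp$, and the self-contained alternative you sketch, are welcome clarifications but not needed for the comparison.
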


\begin{proof}
For $i=1,2$, let $G_i$ be a generator matrix of $C_i$.
By Proposition \ref{prop-LIP-rank}, we have
$\dim(C_1\cap C_2^{\perp})=k-{\rm rank}(G_1G_2^T)~{\rm and}~\dim(C_2\cap C_1^{\perp})=k-{\rm rank}(G_2G_1^T).$
This yields that $\dim(C_1\cap C_2^{\perp})=\dim(C_2\cap C_1^{\perp})$ since ${\rm rank}(G_1G_2^T)={\rm rank}(G_2G_1^T)$.
\end{proof}

\begin{lem}\label{lem-LIP-1}
For $i=1,2$, let $C_i$ be a binary linear $[n,k]$ code. If $\dim(C_1\cap C_2^{\perp})=h\geq 1$, then for $1\leq \ell\leq h$, there exists a mapping $\sigma\in {\rm {\bf PAut}}(\F_2^{n})$ such that $(C_1,(\sigma(C_2))^\perp)$ is an $\ell$-intersection pair of codes, $i.e.$, $\dim(C_1\cap (\sigma(C_2))^\perp)=\ell$.
\end{lem}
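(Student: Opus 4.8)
The plan is to prove the statement by induction on the drop in dimension of the intersection, moving from $h$ down to $\ell$ one step at a time, each step realized by a single transposition. The key algebraic tool is the identity recorded in the preliminaries: if $G_1,G_2$ are generator matrices of $C_1,C_2$ and $\pi$ transposes positions $i$ and $j$, then
$$\pi(G_2)G_1^T = G_2 P_\pi G_1^T = G_2 G_1^T + \big(\mathrm{Col}_j(G_2)-\mathrm{Col}_i(G_2)\big)\big(\mathrm{Row}_i(G_1)-\mathrm{Row}_j(G_1)\big).$$
Thus a transposition adds a rank-$\le 1$ matrix to $G_2G_1^T$. Since by Proposition~\ref{prop-LIP-rank} (applied with $H_2$ a parity check matrix of $C_2$, so $\dim(C_1\cap C_2^\perp)=k-\mathrm{rank}(G_1G_2^T)$, equivalently $k-\mathrm{rank}(G_2G_1^T)$ by Corollary~\ref{cor-rank}), having $\dim(C_1\cap(\sigma(C_2))^\perp)=\ell$ is equivalent to $\mathrm{rank}(\sigma(G_2)G_1^T)=k-\ell$, it suffices to show: starting from a $k\times k$ matrix $M=G_2G_1^T$ of rank $k-h$, we can reach rank $k-\ell$ (i.e., raise the rank by $h-\ell$) by successively adding rank-one matrices of the special form $\vvv\,\uuu^T$ with $\vvv\in\{\mathrm{Col}_j(G_2)-\mathrm{Col}_i(G_2)\}$ and $\uuu\in\{\mathrm{Row}_i(G_1)-\mathrm{Row}_j(G_1)\}$ coming from an actual pair of positions $i\neq j$.

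The heart of the argument is therefore a single rank-increment step: given that $\mathrm{rank}(M)=k-m$ with $m\ge 1$, find a transposition $\pi$ of positions $i,j$ such that $\mathrm{rank}(\pi(G_2)G_1^T)=k-m+1$. Adding $\vvv\uuu^T$ to $M$ raises the rank exactly when $\vvv\notin\mathrm{Col}(M)$ and $\uuu\notin\mathrm{Row}(M)$ (column space, row space), so I need two positions $i,j$ with $\mathrm{Col}_j(G_2)-\mathrm{Col}_i(G_2)\notin\mathrm{Col}(G_2G_1^T)$ and simultaneously $\mathrm{Row}_i(G_1)-\mathrm{Row}_j(G_1)\notin\mathrm{Row}(G_2G_1^T)$. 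Here I would bring in the hypothesis $m\le h$, i.e., the rank of $M$ is still strictly less than $k$, which forces both the column space and the row space of $M$ to be proper subspaces of $\F_2^k$. Note $\mathrm{Col}(G_2G_1^T)$ is spanned by $G_2\,\mathrm{Col}_t(G_1^T)=G_2(\mathrm{Row}_t(G_1))^T$ over all coordinates $t$, so $G_2\vvv^T=G_2(\mathrm{Row}_i(G_1)-\mathrm{Row}_j(G_1))^T$ lies in $\mathrm{Col}(M)$ for every $i,j$; the condition I actually want is on the raw vectors $\mathrm{Col}_i(G_2),\mathrm{Col}_j(G_2)\in\F_2^k$, and since $G_2$ has rank $k$ its columns span $\F_2^k$, so differences of its columns span $\F_2^k$ as well — in particular not all such differences can lie in the proper subspace $\mathrm{Col}(M)$. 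The same reasoning applied to the rows of $G_1$ (which also span $\F_2^k$) shows the row differences cannot all lie in $\mathrm{Row}(M)$. The genuine difficulty is to choose \emph{one} pair $(i,j)$ that works on both sides at once; a clean way is to argue that if every pair failed on the column side then all column differences lie in $\mathrm{Col}(M)$ (contradiction as above), hence some pair $(i_0,j_0)$ works on the column side, and then if that particular pair fails on the row side we can repair it by composing with a further transposition moving $i_0$ or $j_0$ — here I would use that the updated matrix $\pi_0(G_2)G_1^T$ still has rank $\le k$ whenever we have not yet reached the target, keeping its row space proper and rerunning the same dichotomy. (Over $\F_2$ one must be slightly careful that "difference" equals "sum" and that $\mathrm{Col}_i(G_2)\ne\mathrm{Col}_j(G_2)$ can be arranged; if $G_2$ has two equal columns, a preliminary transposition argument or the fact that we only need the spanning property handles it.)

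Granting the single-step lemma, the proof concludes by iterating it $h-\ell$ times: set $\sigma=\pi_{h-\ell}\circ\cdots\circ\pi_1$, the composition of the transpositions produced at each stage, noting that a composition of permutations is again in ${\rm {\bf PAut}}(\F_2^n)$ and that $\sigma(G_2)$ is a generator matrix of $\sigma(C_2)$; then $\mathrm{rank}(\sigma(G_2)G_1^T)=k-\ell$, and by Proposition~\ref{prop-LIP-rank} this gives $\dim(C_1\cap(\sigma(C_2))^\perp)=k-\mathrm{rank}(\sigma(G_2)G_1^T)=\ell$, which is exactly the claim that $(C_1,(\sigma(C_2))^\perp)$ is an $\ell$-intersection pair. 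I expect the main obstacle to be exactly the simultaneous choice of $(i,j)$ in the increment step; everything else — the rank identity, the reduction via Proposition~\ref{prop-LIP-rank}, and the induction bookkeeping — is routine.
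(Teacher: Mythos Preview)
Your overall strategy coincides with the paper's: reduce $\dim(C_1\cap(\sigma(C_2))^\perp)$ one unit at a time by a transposition that raises $\mathrm{rank}(G_2G_1^T)$ by exactly one, then iterate. The gap is precisely where you said it would be, and your proposed fix does not close it.

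First, a small but real error: the claim that ``differences of its columns span $\F_2^k$'' is false over $\F_2$. If every column of $G_2$ has odd weight, every column sum has even weight, so the differences span at most a hyperplane. What is true is that the span of $\{\mathrm{Col}_i(G_2)+\mathrm{Col}_j(G_2)\}$ has dimension at least $k-1$, which still beats $\dim\mathrm{Col}(M)=k-h\le k-2$; but you did not argue this.

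Second, the simultaneous choice of $(i,j)$ is not established. Your ``repair'' --- apply a transposition that succeeds on the column side, then compose with further transpositions to handle the row side --- is not an argument: after a transposition that fails the row condition, the rank of $M$ may stay put or even drop, so you are back at the same problem with no guarantee of progress and no control over landing at each intermediate value $k-\ell$. The paper sidesteps this by first applying a preliminary permutation (via \cite[Lemma~3]{Sigam-LCD}) that puts $G_2$ in block form $\left(\begin{smallmatrix} I_h & A_2\\ O & B_2\end{smallmatrix}\right)$ with $(I_h\ A_2)$ generating $\pi_1(C_2)\cap(\pi_1(C_1))^\perp$; this forces $G_2G_1^T=\left(\begin{smallmatrix} O & O\\ O & B_2E_1^T\end{smallmatrix}\right)$, so ``$\vvv\notin\mathrm{Col}(M)$'' and ``$\uuu^T\notin\mathrm{Row}(M)$'' both reduce to having a nonzero entry among the first $h$ coordinates. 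A short case analysis on the columns of the $h\times h$ block $A_1$ of $G_1$ then produces an explicit transposition satisfying both conditions at once. That structural reduction is the missing ingredient in your sketch.
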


\begin{proof}
If $h=1$ or $\ell=h$, then the result holds by choosing $\sigma$ as the identity map. Next, it is sufficient to consider the case where $h\geq 2$ and $1\leq \ell\leq h-1$.
By Corollary \ref{cor-rank}, $\dim(C_1\cap C_2^{\perp})=\dim(C_2\cap C_1^{\perp})=h.$
According to \cite[Lemma 3]{Sigam-LCD}, there exists a mapping $\pi_1\in {\rm {\bf PAut}}(\F_2^n)$ such that $\pi_1(C_2)$ has a generator matrix of
the form
$$G_2=\begin{pmatrix}
              I_h & A_2 \\
              O & B_2
            \end{pmatrix},$$
where $(I_h ~ A_2)$ is a generator matrix of $(\pi_1(C_1))^\perp\cap \pi_1(C_2)$. We can assume that $\pi(C_1)$ has a generator matrix of
the form
$$G_1=\begin{pmatrix}
              A_1 & B_1 \\
              D_1 & E_1
            \end{pmatrix},$$
where $(A_1 ~ B_1)$ is a generator matrix of $\pi_1(C_1)\cap (\pi_1(C_2))^\perp$ and $A_1$ is an $h\times h$ matrix. Hence $(I_h ~ A_2)G_1^T=O$ and $G_2(A_1 ~ B_1)^T=O$.
Then it is easy to see that
\begin{align*}
  G_2G_1^T & = \begin{pmatrix}
              I_h & A_2 \\
              O & B_2
            \end{pmatrix}
\begin{pmatrix}
              A_1 & B_1 \\
              D_1 & E_1
            \end{pmatrix}^T\\
   & =\begin{pmatrix}
              A_1^T+A_2B_1^T & D_1^T+A_2E_1^T \\
              B_2B_1^T & B_2E_1^T
            \end{pmatrix}\\
& =\begin{pmatrix}
              O & O \\
              O & B_2E_1^T
            \end{pmatrix},
\end{align*}
where $B_2E_1^T$ is a $(k-h)\times (k-h)$ matrix. By Proposition \ref{prop-LIP-rank}, we have ${\rm rank}(G_2G_1^T)={\rm rank}(B_2E_1^T)=k-h.$

\begin{itemize}
  \item {\bf Case 1.} Suppose that there exist $1\leq i<j\leq h$ such that ${\rm Col}_i(A_1)\neq {\rm Col}_j(A_1)$. Let ${\bf a}=({\rm Col}_{j}(G_2)-{\rm Col}_i(G_2))=(a_1,a_2,\ldots,a_k)^T$ and let ${\bf b}=({\rm Col}_{i}(G_1)-{\rm Col}_j(G_1))^T=(b_1,b_2,\ldots,b_k)$. Then there exists $1\leq i_0\leq h$ such that $b_{i_0}\neq 0$. Let $\pi_2\in {\rm {\bf PAut}}(\F_2^{n})$ be the permutation that transposes positions $i$ and $j$. Then
  \begin{align*}
\pi_2(G_2)G_1^T & = G_2G_1^T+({\rm Col}_{j}(G_2)-{\rm Col}_i(G_2))({\rm Row}_{i}(G_1^T)-{\rm Row}_j(G_1^T))\\
& = G_2G_1^T+({\rm Col}_{j}(G_2)-{\rm Col}_i(G_2))({\rm Col}_{i}(G_1)-{\rm Col}_j(G_1))^T\\
&=\begin{pmatrix}
              O & O \\
              O & B_2E_1^T
            \end{pmatrix}
+\begin{pmatrix}
                a_1{\bf b}\\
                a_2{\bf b}\\
                \vdots\\
                a_k{\bf b}
              \end{pmatrix}.
\end{align*}
Since $a_i=1$ and $1\leq i\leq h$, after row operations, we have the following row transformations
$$\pi_2(G_2)G_1^T\sim \begin{pmatrix}
              O & O \\
              O & B_2E_1^T
            \end{pmatrix}
+\begin{pmatrix}
                {\bf b}\\
                {\bf 0}\\
                \vdots\\
                {\bf 0}
              \end{pmatrix}.$$
Since $b_{i_0}\neq 0$ and $1\leq i_0\leq h$, ${\rm rank}(\pi_2(G_2)G_1^T)={\rm rank}(B_2E_1^T)+1=k-h+1$. By Proposition \ref{prop-LIP-rank}, we have $\dim(\pi_1(C_1)\cap (\pi_2\pi_1(C_2))^\perp)=h-1$. This implies that $\dim(C_1\cap (\pi_1^{-1}\pi_2\pi_1(C_2))^\perp)=h-1$.
  \item {\bf Case 2.} Suppose that ${\rm Col}_i(A_1)= {\rm Col}_j(A_1)={\bf d}$ for any $1\leq i<j\leq h$.

  {\bf Subcase 2.1.} If $A_1=O$, then ${\rm rank}(B_1)=h$. Hence there exists $1\leq j\leq n-h$ such that ${\rm Col}_j(B_1)$ is not the zero column. It is easy to see that there exists $1\leq i\leq 2$ such that ${\rm Col}_i(I_h)\neq {\rm Col}_{j}(A_2)$. Let ${\bf a}=({\rm Col}_{j+h}(G_2)-{\rm Col}_i(G_2))=(a_1,a_2,\ldots,a_k)^T$ and let ${\bf b}=({\rm Col}_{i}(G_1)-{\rm Col}_{j+h}(G_1))^T=(b_1,b_2,\ldots,b_k)$. Then there exist $1\leq i_0,j_0\leq h$ such that $a_{i_0}\neq 0$ and $b_{j_0}\neq 0$. Let $\pi_2\in {\rm {\bf PAut}}(\F_2^{n})$ be the permutation that transposes positions $i$ and $j+h$. Then
  \begin{align*}
\pi_2(G_2)G_1^T & = G_2G_1^T+({\rm Col}_{j+h}(G_2)-{\rm Col}_i(G_2))({\rm Row}_{i}(G_1^T)-{\rm Row}_{j+h}(G_1^T))\\
& = G_2G_1^T+({\rm Col}_{j+h}(G_2)-{\rm Col}_i(G_2))({\rm Col}_{i}(G_1)-{\rm Col}_{j+h}(G_1))^T\\
&=\begin{pmatrix}
              O & O \\
              O & B_2E_1^T
            \end{pmatrix}
+\begin{pmatrix}
                a_1{\bf b}\\
                a_2{\bf b}\\
                \vdots\\
                a_k{\bf b}
              \end{pmatrix}.
\end{align*}
Since $a_{i_0}\neq 0$ and $1\leq i_0\leq h$, after row operations, we have the following row transformations
$$\pi_2(G_2)G_1^T\sim \begin{pmatrix}
              O & O \\
              O & B_2E_1^T
            \end{pmatrix}+\begin{pmatrix}
                {\bf b}\\
                {\bf 0}\\
                \vdots\\
                {\bf 0}
              \end{pmatrix}.$$
Since $b_{j_0}\neq 0$ and $1\leq j_0\leq h$, ${\rm rank}(\pi_2(G_2)G_1^T)={\rm rank}(B_2E_1^T)+1=k-h+1$. By Proposition \ref{prop-LIP-rank}, we have $\dim(\pi_1(C_1)\cap (\pi_2\pi_1(C_2))^\perp)=h-1$. This implies that $\dim(C_1\cap (\pi_1^{-1}\pi_2\pi_1(C_2))^\perp)=h-1$.

{\bf Subcase 2.2.} If $A_1\neq O$, then this yields that ${\rm rank}(A_1)=1$. Since ${\rm rank}((A_1,B_1))= h\geq 2$, there exists $1\leq j\leq n-h$ such that ${\rm Col}_j(B_1)\neq {\bf d}$. It is easy to see that there exists $1\leq i\leq 2$ such that ${\rm Col}_i(I_h)\neq {\rm Col}_{j}(A_2)$. Let ${\bf a}=({\rm Col}_{j+h}(G_2)-{\rm Col}_i(G_2))=(a_1,a_2,\ldots,a_k)^T$ and let ${\bf b}=({\rm Col}_{i}(G_1)-{\rm Col}_{j+h}(G_1))^T$. Then there exist $1\leq i_0,j_0\leq h$ such that $a_{i_0}\neq 0$ and $b_{j_0}\neq 0$. Let $\pi_2\in {\rm {\bf PAut}}(\F_2^{n})$ be the permutation that transposes positions $i$ and $j+h$. Similar to {\bf Subcase 2.1}, we have ${\rm rank}(\pi_2(G_2)G_1^T)={\rm rank}(B_2E_1^T)+1=k-h+1$. By Proposition \ref{prop-LIP-rank}, we have $\dim(\pi_1(C_1)\cap (\pi_2\pi_1(C_2))^\perp)=h-1$. This implies that $\dim(C_1\cap (\pi_1^{-1}\pi_2\pi_1(C_2))^\perp)=h-1$.
\end{itemize}
In short, there exists a mapping $\sigma_1\in {\rm {\bf PAut}}(\F_2^{n})$ such that $\dim(C_1\cap (\sigma_1(C_2))^\perp)=h-1$. Similarly, there exists a mapping $\sigma_2\in {\rm {\bf PAut}}(\F_2^{n})$ such that $\dim(C_1\cap (\sigma_2\sigma_1(C_2))^\perp)=h-2$. Continuing this process, then there exists a mapping $\sigma_i\in {\rm {\bf PAut}}(\F_2^{n})$ such that $\dim(C_1\cap (\sigma_i\cdots \sigma_1(C_2))^\perp)=h-i$ for $1\leq i\leq h-1$.
For $1\leq \ell\leq h-1$, by taking $\sigma=\sigma_{h-\ell}\cdots \sigma_1\in {\rm {\bf PAut}}(\F_2^{n})$, we have $\dim(C_1\cap (\sigma(C_2))^\perp)=\ell$.
\end{proof}

Next, we give a characterization for the existence of binary LCPs of codes.

\begin{theorem}\label{thm-LlIP}
Let $C_1$ and $C_2$ be binary linear $[n,k]$ codes. Then there exists a mapping $\sigma\in {\rm {\bf PAut}}(\F_2^{n})$ such that $(C_1,(\sigma(C_2))^\perp)$ is an LCP of codes if and only if ${\bf 1}\notin C_1\cap C_2^\perp$ and ${\bf 1}\notin C_2\cap C_1^\perp$.
\end{theorem}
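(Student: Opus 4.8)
The plan is to prove both directions by analyzing the quantity $\dim(C_1\cap(\sigma(C_2))^\perp)$ and how it interacts with the all-one vector ${\bf 1}$. For the easy direction (necessity of the condition), I would first observe that a permutation $\sigma\in{\rm {\bf PAut}}(\F_2^n)$ fixes ${\bf 1}$, since permuting coordinates leaves the weight---and in particular the property of being the all-one vector---unchanged. Hence ${\bf 1}\in C_1$ iff ${\bf 1}\in\sigma(C_1)$, and similarly for duals using $(\sigma(C))^\perp=\sigma(C^\perp)$. If $(C_1,(\sigma(C_2))^\perp)$ is an LCP, then $C_1\cap(\sigma(C_2))^\perp=\{0\}$, so ${\bf 1}\notin C_1\cap(\sigma(C_2))^\perp$; transporting back via $\sigma^{-1}$ shows ${\bf 1}\notin C_1\cap C_2^\perp$. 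A symmetric argument, or an application of Corollary \ref{cor-rank} together with the fact (stated in the preliminaries) that $(\pi(C))^\perp=\pi(C^\perp)$, gives ${\bf 1}\notin C_2\cap C_1^\perp$ as well. So the interesting content is the sufficiency direction.

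For sufficiency, assume ${\bf 1}\notin C_1\cap C_2^\perp$ and ${\bf 1}\notin C_2\cap C_1^\perp$, and set $h=\dim(C_1\cap C_2^\perp)$; by Corollary \ref{cor-rank} this also equals $\dim(C_2\cap C_1^\perp)$. If $h=0$ we are already done with $\sigma$ the identity. If $h\geq 1$, Lemma \ref{lem-LIP-1} hands us a permutation $\sigma'$ with $\dim(C_1\cap(\sigma'(C_2))^\perp)=1$; so it suffices to push this last dimension of intersection down to $0$ using one more permutation. The key point is that when $\dim(C_1\cap(\sigma'(C_2))^\perp)=1$, this one-dimensional intersection is spanned by a single nonzero vector ${\bf v}$, and the hypothesis ${\bf 1}\notin C_1\cap C_2^\perp$ (which is preserved under $\sigma'$ as noted above, so ${\bf 1}\notin C_1\cap(\sigma'(C_2))^\perp$) forces ${\bf v}\neq{\bf 1}$. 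I would then mimic the matrix-and-permutation machinery from the proof of Lemma \ref{lem-LIP-1}: write $\sigma'(C_2)$ with a generator matrix $G_2=\begin{pmatrix}1 & {\bf a}_2\\ O & B_2\end{pmatrix}$ whose top row generates $C_1^\perp\cap\sigma'(C_2)$ (one-dimensional by symmetry of the intersection dimensions), write $C_1$ with generator matrix $G_1=\begin{pmatrix}{\bf a}_1 & {\bf b}_1\\ D_1 & E_1\end{pmatrix}$ whose top row generates $C_1\cap(\sigma'(C_2))^\perp$, and use the transposition identity $\pi(A)B=AB+({\rm Col}_j(A)-{\rm Col}_i(A))({\rm Row}_i(B)-{\rm Row}_j(B))$ to find a transposition $\pi_2$ that bumps ${\rm rank}(\pi_2(G_2)G_1^T)$ up by one, i.e.\ from $k-1$ to $k$. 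By Proposition \ref{prop-LIP-rank} that yields $\dim(C_1\cap(\pi_2\sigma'(C_2))^\perp)=0$, and $\sigma=\pi_2\sigma'$ (conjugated appropriately to undo the coordinate bookkeeping) is the desired map.

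The main obstacle---and the place where the constraint ${\bf v}\neq{\bf 1}$ is genuinely used---is producing the transposition $\pi_2$ in the $h=1$ step. In the proof of Lemma \ref{lem-LIP-1} the rank increase was achieved by exploiting either two unequal columns of the $h\times h$ block $A_1$ (Case 1) or, when $h\geq 2$, the rank of the off-diagonal block $B_1$ (Cases 2.1, 2.2); with $h=1$ there is only a single column in the relevant position, so that argument degenerates and one must instead locate a coordinate where the generator of the intersection differs from the all-one pattern. Concretely, the top row $({\bf a}_1\mid{\bf b}_1)$ of $G_1$ spans $C_1\cap(\sigma'(C_2))^\perp=\langle{\bf v}\rangle$; since ${\bf v}\neq{\bf 1}$ and also ${\bf v}\neq{\bf 0}$, there is a coordinate $i$ with $v_i=1$ and a coordinate $j$ with $v_j=0$ (or vice versa), and transposing these two positions changes the inner-product structure enough to raise the rank of $G_2G_1^T$. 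I would carry out this case analysis---splitting on whether the differing coordinates lie in the first block or the second block of the partition, exactly as in Subcases 2.1 and 2.2---checking in each case that the perturbation term $({\rm Col}_j(G_2)-{\rm Col}_i(G_2))({\rm Col}_i(G_1)-{\rm Col}_j(G_1))^T$ has a nonzero entry in a row outside the support of $B_2E_1^T$ after row reduction. Everything else (the transposition identity, the block decomposition, Proposition \ref{prop-LIP-rank}) is routine bookkeeping already set up in the excerpt.
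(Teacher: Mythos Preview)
Your necessity direction and the reduction via Lemma~\ref{lem-LIP-1} to the case $\dim(C_1\cap(\sigma'(C_2))^\perp)=1$ are fine and match the paper. The gap is in the final rank-bump step: you invoke only the first hypothesis, ${\bf 1}\notin C_1\cap C_2^\perp$, to conclude that the generator ${\bf v}$ of $C_1\cap(\sigma'(C_2))^\perp$ is not ${\bf 1}$, and then propose to transpose two coordinates $i,j$ where ${\bf v}$ differs. But the rank of $\pi_2(G_2)G_1^T$ does not increase just because ${\rm Col}_i(G_1)\neq{\rm Col}_j(G_1)$; in your own notation, the perturbation is ${\bf c}\,{\bf d}$ with ${\bf c}={\rm Col}_j(G_2)-{\rm Col}_i(G_2)$ and ${\bf d}=({\rm Col}_i(G_1)-{\rm Col}_j(G_1))^T$, and for the ``nonzero entry in a row outside the support of $B_2E_1^T$ after row reduction'' you need $c_1\neq 0$. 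That first entry of ${\bf c}$ is $b_j-b_i$, where ${\bf b}$ is the top row of $G_2$, i.e.\ the generator of $\sigma'(C_2)\cap C_1^\perp$. If ${\bf b}={\bf 1}$ then $c_1=0$ for \emph{every} choice of $i,j$, the first row of $\pi_2(G_2)G_1^T$ stays zero, and the rank is stuck at $k-1$. Nothing in your argument rules this out; the Subcase~2.1/2.2 template you appeal to needed $h\geq 2$ precisely to dodge this issue.

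This is exactly where the second hypothesis ${\bf 1}\notin C_2\cap C_1^\perp$ is required, and it is the crux of the paper's proof. From it (and the invariance of ${\bf 1}$ under permutations) one gets ${\bf b}\neq{\bf 1}$, and then one proves the small combinatorial claim: if ${\bf a},{\bf b}\in\F_2^n\setminus\{{\bf 0},{\bf 1}\}$, there exist $i<j$ with $a_i\neq a_j$ \emph{and} $b_i\neq b_j$. (Indeed, if every pair with $a_r\neq a_s$ had $b_r=b_s$, then ${\bf b}$ would be constant.) With such $i,j$ in hand, both $c_1\neq 0$ and $d_1\neq 0$, and the rank genuinely jumps to $k$. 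Your plan becomes correct once you insert this step; as written it is incomplete, because the transposition you choose based solely on ${\bf v}$ need not move the rank at all.
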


\begin{proof}
Suppose that $\dim(C_1\cap C_2^\perp)=h$. If $h=0$, then the result obviously holds by choosing $\sigma$ as the identity map. Next, it is sufficient to consider the case where $h\geq 1$.

$(\Longrightarrow)$ If ${\bf 1}\in C_1\cap C_2^\perp$, then for any $\sigma\in {\rm {\bf PAut}}(\F_2^n)$, we have ${\bf 1}\in \sigma(C_2^\perp)=(\sigma(C_2))^\perp.$ This implies that ${\bf 1}\in C_1\cap (\sigma(C_2))^\perp$. Hence $(C_1,(\sigma(C_2))^\perp)$ is not LCP for any $\sigma\in {\rm {\bf PAut}}(\F_2^n)$, which is a contradiction.
If ${\bf 1}\in C_2\cap C_1^\perp$, then ${\bf 1}\in \sigma(C_2)\cap C_1^\perp$ for any $\sigma\in {\rm {\bf PAut}}(\F_2^n)$ by a similar way. By Proposition \ref{prop-LIP-rank}, $\dim(C_1\cap (\sigma(C_2))^\perp)=\dim(\sigma(C_2)\cap C_1^\perp)\geq 1$. Hence $(C_1,(\sigma(C_2))^\perp)$ is not LCP for any $\sigma\in {\rm {\bf PAut}}(\F_2^n)$, which is a contradiction.

$(\Longleftarrow)$ Assume that ${\bf 1}\notin C_1\cap C_2^\perp$ and ${\bf 1}\notin C_2\cap C_1^\perp$. By Lemma \ref{lem-LIP-1}, there exists a mapping $\pi_1\in {\rm {\bf PAut}}(\F_2^{n})$ such that $\dim(C_1\cap \pi_1(C_2)^\perp)=1$. Assume that $C_1\cap (\pi_1(C_2))^\perp=\{{\bf 0},{\bf a}\}$ and $\pi_1(C_2)\cap C_1^\perp=\{{\bf 0},{\bf b}\},$ where ${\bf a}=(a_1,a_2,\ldots,a_n)$ and ${\bf b}=(b_1,b_2,\ldots,b_n)$. Based on assumption, we have ${\bf a}\neq {\bf 1}$ and ${\bf b}\neq {\bf 1}$.
We claim that there exist $1\leq i<j\leq n$ such that $a_i\neq a_j$ and $b_i\neq b_j$. Otherwise, for any $1\leq r<s\leq n$, $a_r\neq a_s$ must yield that $b_r=b_s$. Together with the facts that ${\bf a}\neq {\bf 0}$ and ${\bf a}\neq {\bf 1}$, we have ${\bf b}={\bf 0}$ or ${\bf 1}$. This contradicts the facts that ${\bf b}\neq {\bf 0}$ and ${\bf b}\neq {\bf 1}$. Therefore, the claim is valid.
We assume that $C_1$ and $\pi_1(C_2)$ respectively have the following generator matrices:
$$G_1=\begin{pmatrix}
              {\bf a} \\
              A
            \end{pmatrix}~{\rm and}~G_2=\begin{pmatrix}
              {\bf b} \\
              B
            \end{pmatrix}.$$
Hence $$G_2G_1^T=\begin{pmatrix}
              {\bf a} \\
              A
            \end{pmatrix}
\begin{pmatrix}
              {\bf b} \\
              B
            \end{pmatrix}^T=\begin{pmatrix}
              0 &{\bf 0}\\
              {\bf 0}&AB^T
            \end{pmatrix}.$$
This yields that ${\rm rank}(G_2G_1^T)={\rm rank}(AB^T)=k-1$. Let ${\bf c}=({\rm Col}_{j}(G_2)-{\rm Col}_i(G_2))=(c_1,c_2,\ldots,c_k)^T$ and let ${\bf d}=({\rm Col}_{i}(G_1)-{\rm Col}_j(G_1))^T=(d_1,d_2,\ldots,d_k)$. Let $\pi_2\in {\rm {\bf PAut}}(\F_2^{n})$ be the permutation that transposes positions $i$ and $j$. Then
      \begin{align*}
        \pi_2(G_2)G_1^T& = G_2G_1^T+({\rm Col}_{j}(G_2)-{\rm Col}_i(G_2))({\rm Row}_{i}(G_1^T)-{\rm Row}_j(G_1^T))\\
        & =G_2G_1^T+({\rm Col}_{j}(G_2)-{\rm Col}_i(G_2))({\rm Col}_{i}(G_1)-{\rm Col}_j(G_1))^T \\
        & =\begin{pmatrix}
              0 &{\bf 0}\\
              {\bf 0}&AB^T
            \end{pmatrix}+\begin{pmatrix}
                c_1{\bf d}\\
                c_2{\bf d}\\
                \vdots\\
                c_k{\bf d}
              \end{pmatrix}.
      \end{align*}
    Since $c_1=b_j-b_i\neq 0$, after row operations, we have the following row transformations
$$\pi_2(G_2)G_1^T\sim \begin{pmatrix}
              0 &{\bf 0}\\
              {\bf 0}&AB^T
            \end{pmatrix}+\begin{pmatrix}
                {\bf d}\\
                {\bf 0}\\
                \vdots\\
                {\bf 0}
              \end{pmatrix}.$$
Note that $d_1=a_i-a_j\neq0$. Hence ${\rm rank}(\pi_2(G_2)G_1^T)={\rm rank}(AB^T)+1=k$. By Proposition \ref{prop-LIP-rank}, we have $\dim(C_1\cap (\pi_2\pi_1(C_2))^\perp)=0$.
Let $\sigma=\pi_2\pi_1\in {\rm {\bf PAut}}(\F_2^n)$. Then $(C_1,(\sigma(C_2))^\perp)$ is an LCP of codes, $i.e.$, $\dim(C_1\cap (\sigma(C_2))^\perp)=0$.
\end{proof}

\begin{example}
Let $C_1$ and $C_2$ be two binary linear $[8,4]$ codes with the following generator matrices respectively,
\begin{center}
$G_1=\begin{pmatrix}
1&0&0&0&1&1&1&0\\
0&1&0&1&0&0&0&1\\
0&0&1&0&1&1&0&0\\
0&0&0&1&1&1&0&0
           \end{pmatrix}$ and $G_2=\begin{pmatrix}
1&0&0&0&0&0&1&0\\
0&1&0&0&0&0&0&1\\
0&0&1&1&0&1&1&1\\
0&0&0&1&1&1&0&1
           \end{pmatrix}.$
\end{center}
It can be checked that $C_1\cap C_2^\perp$ and $C_2\cap C_1^\perp$ are binary linear $[8,3]$ codes generated by the first three rows of $G_1$ and $G_2$, respectively. By Lemma \ref{lem-LIP-1}, let $\pi_1\in {\rm {\bf PAut}}(\F_2^8)$ be defined by $\pi_1(c_1,c_2,c_3,c_4,c_5,c_6,c_7,c_8)=(c_1,c_3,c_2,c_4,c_5,c_6,c_7,c_8).$ Then $C_1$ and $\pi_1(C_2)$ respectively have generator matrices as follows:
\begin{center}
$G'_1=\begin{pmatrix}
1&0&0&0&1&1&1&0\\
0&1&1&1&1&1&0&1\\
0&0&1&0&1&1&0&0\\
0&0&0&1&1&1&0&0
           \end{pmatrix}$ and $G'_2=\begin{pmatrix}
1&0&0&0&0&0&1&0\\
0&1&1&1&0&1&1&0\\
0&0&1&1&0&1&1&1\\
0&0&0&1&1&1&0&1
           \end{pmatrix}.$
\end{center}
It can be checked that $C_1\cap (\pi_1(C_2))^\perp$ and $\pi_1(C_2)\cap C_1^\perp$ are binary linear $[8,2]$ codes generated by the first two rows of $G'_1$ and $G'_2$, respectively.
By Lemma \ref{lem-LIP-1}, let $\pi_2\in {\rm {\bf PAut}}(\F_2^8)$ be defined by $\pi_1(c_1,c_2,c_3,c_4,c_5,c_6,c_7,c_8)=(c_2,c_1,c_3,c_4,c_5,c_6,c_7,c_8).$ Then $C_1$ and $\pi_2\pi_1(C_2)$ respectively have generator matrices as follows:
\begin{center}
$G''_1=\begin{pmatrix}
1&1&1&1&0&0&1&1\\
0&1&1&1&1&1&0&1\\
0&0&1&0&1&1&0&0\\
0&0&0&1&1&1&0&0
           \end{pmatrix}$ and $G''_2=\begin{pmatrix}
1&1&1&1&0&1&0&0\\
0&1&1&1&0&1&1&0\\
0&0&1&1&0&1&1&1\\
0&0&0&1&1&1&0&1
           \end{pmatrix}.$
\end{center}
It can be checked that $C_1\cap (\pi_2\pi_1(C_2))^\perp$ and $\pi_2\pi_1(C_2)\cap C_1^\perp$ are binary linear $[8,1]$ codes generated by the first row of $G'_1$ and $G'_2$, respectively. By Theorem \ref{thm-LlIP}, let $\pi_3\in {\rm {\bf PAut}}(\F_2^8)$ be defined by $\pi_3(c_1,c_2,c_3,c_4,c_5,c_6,c_7,c_8)=(c_1,c_2,c_3,c_5,c_4,c_6,c_7,c_8).$ Then $C_1$ and $\pi_3\pi_2\pi_1(C_2)$ respectively have generator matrices as follows:
\begin{center}
$G'''_1=\begin{pmatrix}
1&1&1&1&0&0&1&1\\
0&1&1&1&1&1&0&1\\
0&0&1&0&1&1&0&0\\
0&0&0&1&1&1&0&0
           \end{pmatrix}$ and $G'''_2=\begin{pmatrix}
1&1&1&0&1&1&0&0\\
0&1&1&0&1&1&1&0\\
0&0&1&0&1&1&1&1\\
0&0&0&1&1&1&0&1
           \end{pmatrix}.$
\end{center}
It can be checked that $\dim(C_1\cap (\pi_3\pi_2\pi_1(C_2))^\perp)=0$. Therefore, let $\sigma=\pi_3\pi_2\pi_1\in {\rm {\bf PAut}}(\F_2^8)$ be defined by $\sigma(c_1,c_2,c_3,c_4,c_5,c_6,c_7,c_8)=(c_3,c_1,c_2,c_5,c_4,c_6,c_7,c_8).$ Then we have $\dim(C_1\cap (\sigma(C_2))^\perp)=0$, that is, $(C_1,(\sigma(C_2))^\perp)$ is an LCP of codes.
\end{example}

For a linear code $C$, let $\{0\}\times C:=\{(0,{\bf c})\in \F_2\times \F_2^n~|~{\bf c}\in C\}.$

\begin{theorem}{\rm \cite[Open Problem]{Sigam-LCD}}\label{thm-binaryLCP}
Let $C_1$ and $C_2$ be binary linear $[n,k]$ codes. Then, there exists $\sigma\in {\rm {\bf PAut}}(\F_2^{n+1})$ such that $(\{0\}\times C_1,(\sigma(\{0\}\times C_2))^\perp)$ is an LCP of codes.
\end{theorem}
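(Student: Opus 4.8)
The plan is to deduce this directly from Theorem \ref{thm-LlIP}, applied to the ``padded'' codes $\{0\}\times C_1$ and $\{0\}\times C_2$ inside $\F_2^{n+1}$. First I would note that $\{0\}\times C_1$ and $\{0\}\times C_2$ are both binary linear $[n+1,k]$ codes, so the hypotheses of Theorem \ref{thm-LlIP} (same length, same dimension) are met for the pair $(\{0\}\times C_1,\{0\}\times C_2)$, and it remains only to verify the two nonmembership conditions in that theorem.

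The key observation is that every codeword of $\{0\}\times C_i$ has first coordinate $0$, whereas the all-ones vector ${\bf 1}$ of length $n+1$ has first coordinate $1$; hence ${\bf 1}\notin \{0\}\times C_1$ and ${\bf 1}\notin \{0\}\times C_2$. Since the intersections $(\{0\}\times C_1)\cap(\{0\}\times C_2)^\perp$ and $(\{0\}\times C_2)\cap(\{0\}\times C_1)^\perp$ are contained in $\{0\}\times C_1$ and $\{0\}\times C_2$ respectively, we get ${\bf 1}\notin (\{0\}\times C_1)\cap(\{0\}\times C_2)^\perp$ and ${\bf 1}\notin (\{0\}\times C_2)\cap(\{0\}\times C_1)^\perp$. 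Theorem \ref{thm-LlIP} then supplies a mapping $\sigma\in {\rm {\bf PAut}}(\F_2^{n+1})$ with $\dim\bigl((\{0\}\times C_1)\cap(\sigma(\{0\}\times C_2))^\perp\bigr)=0$, and since $\dim(\{0\}\times C_1)+\dim\bigl((\sigma(\{0\}\times C_2))^\perp\bigr)=k+(n+1-k)=n+1$, this is precisely the statement that $(\{0\}\times C_1,(\sigma(\{0\}\times C_2))^\perp)$ is an LCP of codes.

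As for where the difficulty lies: essentially all of it has already been absorbed into Lemma \ref{lem-LIP-1} and Theorem \ref{thm-LlIP}. The only new point here is the elementary remark that prepending a zero coordinate to every codeword automatically destroys membership of the all-ones vector, so the obstruction identified in Theorem \ref{thm-LlIP} can never occur for codes of the form $\{0\}\times C$. Consequently the present statement is a clean corollary rather than an independent argument, and I do not anticipate any genuine obstacle in writing it out.
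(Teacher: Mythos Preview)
Your proposal is correct and is exactly the approach taken in the paper: the authors' proof consists of the single line ``The proof is obvious by Theorem \ref{thm-LlIP},'' and you have simply (and correctly) spelled out why the hypotheses of that theorem are automatically satisfied for codes of the form $\{0\}\times C_i$.
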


\begin{proof}
The proof is obvious by Theorem \ref{thm-LlIP}.
\end{proof}

\section{The best security parameters for binary LCPs of codes}

\subsection{A conjecture on the best security parameters of binary LCP of codes}
The {\em hull} of a linear code $C$ is defined by ${\rm Hull}(C):=C\cap C^{\perp}.$ This notion was introduced by Assmus and Key \cite{A-hull-DM} to classify finite projective planes.
Lemma \ref{lem-LIP-1'} is actually included in Lemma \ref{lem-LIP-1} ($C_1=C_2$). Here we provide the specific form of $\sigma \in {\rm {\bf PAut}}(\F_2^n)$. We first introduce a useful proposition.

\begin{prop}{\rm\cite[Proposition 2]{Sigam-LCD}}\label{prop-hull-rank}
Let $\sigma\in {\rm {\bf PAut}}(\F_2^n)$ and let $C$ be a binary linear $[n,k]$ code with generator matrix $G$. Then we have
$$\dim(C \cap (\sigma(C))^\perp)=k-{\rm rank}(G(\sigma(G))^T),$$
where $\sigma(G)$ is the $k\times n$ matrix with its $i$th row $\sigma(G(i,:))$, and $G(i,:)$ is the $i$th row of $G$ for $1\leq i\leq k.$
\end{prop}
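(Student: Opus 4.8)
The plan is to derive the formula from Proposition~\ref{prop-LIP-rank} by choosing the intersecting pair appropriately, and, as an even shorter self-contained alternative, by a one-line linear-algebra computation. The only facts needed beyond Proposition~\ref{prop-LIP-rank} are that a permutation acts as an invertible linear map on $\F_2^n$ and that $\mathrm{rank}(M)=\mathrm{rank}(M^T)$.

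First I would record that $\sigma(G)$ is a generator matrix of $\sigma(C)$: since $\sigma\in{\rm {\bf PAut}}(\F_2^n)$ is a bijective linear map, it sends the rows of $G$ (a basis of $C$) to the rows of $\sigma(G)$, which are therefore a basis of $\sigma(C)$. Hence $\sigma(G)$ is simultaneously a generator matrix of $\sigma(C)$ and a parity check matrix of $(\sigma(C))^\perp$. Now apply Proposition~\ref{prop-LIP-rank} with $C_1=C$ (taking $G_1=G$) and $C_2=(\sigma(C))^\perp$ (taking $H_2=\sigma(G)$). Setting $\ell=\dim(C_1\cap C_2)=\dim\big(C\cap(\sigma(C))^\perp\big)$, the proposition gives $\mathrm{rank}(G_1H_2^T)=k_1-\ell$, i.e. $\mathrm{rank}\big(G(\sigma(G))^T\big)=k-\ell$, which rearranges to the claimed identity $\dim\big(C\cap(\sigma(C))^\perp\big)=k-\mathrm{rank}\big(G(\sigma(G))^T\big)$.

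As a cleaner stand-alone argument I would instead reason directly. Every codeword of $C$ is uniquely $\mathbf{m}G$ for some $\mathbf{m}\in\F_2^k$ (uniqueness because $G$ has rank $k$), and $\mathbf{m}G$ lies in $(\sigma(C))^\perp$ exactly when it is orthogonal to every row of $\sigma(G)$, that is, when $\sigma(G)\,G^T\mathbf{m}^T=\mathbf{0}$. Thus the map $\mathbf{m}\mapsto\mathbf{m}G$ identifies $C\cap(\sigma(C))^\perp$ with the null space of the $k\times k$ matrix $\sigma(G)G^T$, whose dimension is $k-\mathrm{rank}\big(\sigma(G)G^T\big)$ by rank–nullity. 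Finally $\mathrm{rank}\big(\sigma(G)G^T\big)=\mathrm{rank}\big((\sigma(G)G^T)^T\big)=\mathrm{rank}\big(G(\sigma(G))^T\big)$, giving the result.

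There is essentially no obstacle here; the statement is a bookkeeping consequence of duality. The only points requiring a moment of care are the identification of $\sigma(G)$ as a parity check matrix of $(\sigma(C))^\perp$ (equivalently, a generator matrix of $\sigma(C)$, using the earlier remark that $(\sigma(C))^\perp=\sigma(C^\perp)$) and the transpose-invariance of rank used to match $G(\sigma(G))^T$ with $\sigma(G)G^T$. One may also view the whole statement as the diagonal case $C_1=C_2=C$ of the relative-hull computation underlying Corollary~\ref{cor-rank} and Lemma~\ref{lem-LIP-1}, which is precisely how it is used in the sequel.
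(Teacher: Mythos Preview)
Your proof is correct. Note, however, that the paper does not supply its own proof of this proposition: it is quoted verbatim from \cite[Proposition~2]{Sigam-LCD} without argument. Your first derivation via Proposition~\ref{prop-LIP-rank} is exactly parallel to how the paper proves Corollary~\ref{cor-rank} (the two-code version), so in spirit it matches the paper's methodology; your second, self-contained rank--nullity argument is a clean alternative that avoids citing any external result.
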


\begin{lem}\label{lem-LIP-1'}
Let $C$ be a binary linear $[n,k]$ code such that $\dim({\rm Hull}(C))=h\geq 2$. Then for $1\leq \ell\leq h$, there exists $\sigma\in {\rm {\bf PAut}}(\F_2^n)$ such that
$(C,(\sigma(C))^\perp)$ is an $\ell$-intersection pairs of codes, $i.e.$, $\dim(C\cap (\sigma(C))^\perp)=\ell$.
\end{lem}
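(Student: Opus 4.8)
The plan is to derive Lemma \ref{lem-LIP-1'} as a direct specialization of Lemma \ref{lem-LIP-1}. The key observation is that for a binary linear $[n,k]$ code $C$, the hull ${\rm Hull}(C)=C\cap C^\perp$ is exactly the intersection $C_1\cap C_2^\perp$ in the notation of Lemma \ref{lem-LIP-1} when we set $C_1=C_2=C$. Thus the hypothesis $\dim({\rm Hull}(C))=h\ge 2$ is precisely the hypothesis $\dim(C_1\cap C_2^\perp)=h\ge 1$ of Lemma \ref{lem-LIP-1} in this special case (indeed with the stronger $h\ge 2$, so there is nothing delicate about the edge case $h=1$).

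Concretely, first I would apply Lemma \ref{lem-LIP-1} with $C_1=C_2=C$. That lemma guarantees, for each $1\le \ell\le h$, the existence of $\sigma\in {\rm {\bf PAut}}(\F_2^n)$ with $\dim(C\cap(\sigma(C))^\perp)=\ell$. Since by definition $(C,(\sigma(C))^\perp)$ being an $\ell$-intersection pair means exactly $\dim(C\cap(\sigma(C))^\perp)=\ell$, this is literally the conclusion of Lemma \ref{lem-LIP-1'}. So at the level of mere existence the proof is essentially one line: "Take $C_1=C_2=C$ in Lemma \ref{lem-LIP-1}."

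However, the sentence preceding the lemma statement says the authors want to \emph{exhibit the specific form} of $\sigma$. So the real content of the proof is to unwind the (potentially iterated) construction in the proof of Lemma \ref{lem-LIP-1} in the symmetric case $C_1=C_2=C$ and record explicitly what the transpositions $\pi_2$ are. Here I would fix a generator matrix $G$ of $C$ in the block form used in Lemma \ref{lem-LIP-1}: after an initial permutation $\pi_1$, write $\pi_1(G)$ so that its first $h$ rows $(I_h\ A)$ (or the appropriate hull-revealing form) span ${\rm Hull}$, and then, invoking Proposition \ref{prop-hull-rank} in place of Proposition \ref{prop-LIP-rank} since now $G_1=G_2=G$, identify at each step a column pair $(i,j)$ on which the relevant columns of $G$ differ; the transposition $\pi_2$ swapping positions $i$ and $j$ drops the hull dimension by one. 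Iterating $h-\ell$ times produces $\sigma=\pi_{h-\ell}\cdots\pi_1$ with $\dim(C\cap(\sigma(C))^\perp)=\ell$.

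The main obstacle — really the only subtlety — is checking that in the symmetric specialization the case analysis of Lemma \ref{lem-LIP-1} still furnishes a valid column pair: there the vectors ${\bf a}$ and ${\bf b}$ played asymmetric roles (one from $C_1\cap C_2^\perp$, one from $C_2\cap C_1^\perp$), whereas for $C_1=C_2=C$ they coincide, ${\bf a}={\bf b}$, and one must confirm that the condition "$a_i\neq a_j$ and $b_i\neq b_j$" collapses harmlessly to "$a_i\neq a_j$", which is available as long as ${\bf a}\notin\{{\bf 0},{\bf 1}\}$ — and this is where the hypothesis $h\ge 2$ (rather than $h\ge1$) is actually used to guarantee room to pick such a codeword ${\bf a}$ in the hull, or else to handle the ${\bf a}={\bf 1}$ possibility by a preliminary change of basis. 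Once that single point is dispatched, everything else is the same bookkeeping with $G$ in place of both $G_1$ and $G_2$, and Proposition \ref{prop-hull-rank} replacing Proposition \ref{prop-LIP-rank}.
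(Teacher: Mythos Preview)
Your one-line reduction ``take $C_1=C_2=C$ in Lemma \ref{lem-LIP-1}'' is correct and, as the paper itself notes just before the lemma, already proves the statement. So at the level of existence there is nothing more to do.

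Where your proposal diverges from the paper is in the promised \emph{explicit} description of $\sigma$. You propose to unwind the iterated-transposition argument of Lemma \ref{lem-LIP-1}, dropping the intersection dimension by one at each step. The paper instead gives a direct, non-iterative construction: after normalizing so that $G=\begin{pmatrix} I_h & A\\ O & B\end{pmatrix}$ with $(I_h\ A)$ spanning the hull, it applies a single cyclic shift $\pi_i$ on the first $i$ coordinates and computes $G(\pi_i(G))^T$ explicitly, showing its rank is exactly $k-h+i-1$; taking $i=h-\ell+1$ gives the desired $\ell$ in one stroke. This buys a cleaner, closed-form $\sigma$ (one cyclic permutation composed with the initial normalization) rather than a product of $h-\ell$ transpositions found by case analysis.

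One caution about your last paragraph: the vectors ${\bf a},{\bf b}$ and the condition ``$a_i\neq a_j$ and $b_i\neq b_j$'' that you discuss come from the proof of Theorem \ref{thm-LlIP} (the step from intersection dimension $1$ down to $0$), not from the proof of Lemma \ref{lem-LIP-1}. The iteration in Lemma \ref{lem-LIP-1} is driven by comparing \emph{columns} of the block $A_1$, not by single spanning vectors of a one-dimensional intersection, so the ``${\bf a}={\bf b}$'' collapse you describe and the claimed role of $h\ge 2$ in avoiding ${\bf a}={\bf 1}$ do not match what actually happens in that proof. Since your existence argument is already complete by citation, this confusion is harmless to correctness, but if you want to genuinely exhibit $\sigma$ you should either carry out the column-based case analysis of Lemma \ref{lem-LIP-1} with $G_1=G_2$, or adopt the paper's cyclic-shift computation.
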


\begin{proof}
If $h=1$, then the result holds by choosing $\sigma$ as the identity map. Next, it is sufficient to consider the case where $h\geq 2$.
According to \cite[Lemma 3]{Sigam-LCD}, there exists a mapping $\pi\in {\rm {\bf PAut}}(\F_2^n)$ such that $\pi(C)$ has a generator matrix of
the form
$$G=\begin{pmatrix}
              I_{h} & A \\
              O & B
            \end{pmatrix},$$
where $(I_{h} ~ A)$ is a generator matrix of ${\rm Hull}(\pi(C))$. Hence $(I_{h} ~ A)G^T=O$ and $G(I_{h} ~ A)^T=O$. This implies that $I_h+AA^T=O$, $AB^T=O$ and $BA^T=O.$ Then it is easy to see that
\begin{align*}
  GG^T & = \begin{pmatrix}
              I_h & A \\
              O & B
            \end{pmatrix}
\begin{pmatrix}
              I_h & A \\
              O & B
            \end{pmatrix}^T\\
   & =\begin{pmatrix}
              I_h+AA^T & AB^T \\
              BA^T & BB^T
            \end{pmatrix}\\
& =\begin{pmatrix}
              O & O \\
              O & BB^T
            \end{pmatrix}.
\end{align*}
Further, by Proposition \ref{prop-hull-rank}, ${\rm rank}(GG^T)={\rm rank}(BB^T)=k-h.$ For $1\leq i\leq h$, let $\pi_i\in {\rm {\bf PAut}}(\F_2^n)$ be defined by $\pi_i(c_1,c_2,\ldots,c_i,c_{i+1},\ldots,c_n)=(c_i,c_1,\ldots,c_{i-1},c_{i+1},\ldots,c_n)$.
Let $\tau_i\in {\rm {\bf PAut}}(\F_2^h)$ be defined by $\tau_i(c_1,\ldots,c_i,c_{i+1},\ldots,c_h)=(c_i,c_1,\ldots,c_{i-1},c_{i+1},\ldots,c_h)$. Hence $\pi_i({\bf a},{\bf b})=(\tau_i({\bf a}),{\bf b})$, where ${\bf a}\in \F_2^h$ and ${\bf b}\in \F_2^{n-h}$.
Since $h\geq 2$, we have
  \begin{align*}
  G(\pi_i(G))^T & =\begin{pmatrix}
              I_h & A \\
              O & B
            \end{pmatrix}
\left(\pi_i\begin{pmatrix}
              I_h & A \\
              O & B
            \end{pmatrix}\right)^T\\
& = \begin{pmatrix}
              I_h & A \\
              O & B
            \end{pmatrix}
\begin{pmatrix}
              \tau_i(I_h) & A \\
              O & B
            \end{pmatrix}^T\\
   & =\begin{pmatrix}
              \tau_i(I_h)+AA^T & AB^T \\
              BA^T & BB^T
            \end{pmatrix}\\
& =\begin{pmatrix}
              \tau_i(I_h)-I_h & O \\
              O & BB^T
            \end{pmatrix}.
\end{align*}
Note that
$$\tau_i(I_h)-I_h  = \begin{pmatrix}
                              \begin{array}{cc}
                                {\bf 0} & 1\\
                                I_{i-1} & {\bf 0}
                              \end{array}
                               & O \\
                              O & I_{h-i}
                            \end{pmatrix}-I_h\\
    =\begin{pmatrix}
              D & O \\
              O & O
            \end{pmatrix},$$
where $$D=\begin{pmatrix}
                 -1 & 0 & 0 & \cdots & 0 & 1 \\
                 1 & -1 & 0 & \ddots & 0 & 0 \\
                 0 & 1 & -1& \ddots & 0& 0 \\
                 \vdots & \ddots & \ddots & \ddots & \ddots & \vdots \\
                 0 & 0 & 0 & \ddots & -1 & 0\\
                 0 & 0 & 0 & \cdots & 1 & -1
               \end{pmatrix}_{i\times i}.$$
It can be checked that the sum of all rows of $D$ is the zero row. Further, $D$ contains a submatrix with rank $i-1$. Hence ${\rm rank}(D)=i-1$. It follows that
$${\rm rank}(G(\pi_i(G))^T)={\rm rank}(GG^T)+{\rm rank}(D)=k-h+i-1.$$
Note that ${\rm rank}(G(\pi_{h-\ell+1}(G))^T)=k-h+(h-\ell+1)-1=k-\ell.$ Let $\sigma=\pi_{h-\ell+1}\pi\in {\rm {\bf PAut}}(\F_2^n)$.
By Proposition \ref{prop-hull-rank}, we have $\dim(C\cap (\sigma(C))^\perp)=\dim(C\cap (\pi_{h-\ell+1}\pi(C))^\perp)=k-{\rm rank}(G(\pi_{h-\ell+1}(G))^T)=k-(k-\ell)=\ell$. This completes the proof.
\end{proof}

\begin{lem}\label{lem-LIP-2}
Let $C$ be a binary linear $[n,k]$ code such that $\dim({\rm Hull}(C))=h$ and ${\bf 1}\notin {\rm Hull}(C)$. Then for $0\leq \ell\leq h$, there exists a mapping $\sigma\in {\rm {\bf PAut}}(\F_2^n)$ such that $(C,(\sigma(C))^\perp)$ is an $\ell$-intersection pairs of codes, $i.e.$, $\dim(C\cap (\sigma(C))^\perp)=\ell$.
\end{lem}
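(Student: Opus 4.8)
The plan is to bootstrap from the two lemmas already in hand, treating the case $\ell = h$ and the case $0 \le \ell \le h-1$ separately. If $h = 0$ the statement is trivial with $\sigma$ the identity map, so assume $h \ge 1$. For $\ell = h$, again take $\sigma$ to be the identity, since $\dim(C \cap C^\perp) = \dim(\mathrm{Hull}(C)) = h$. The interesting range is $0 \le \ell \le h-1$, and the goal is to produce a permutation $\sigma$ lowering the intersection dimension from $h$ down to exactly $\ell$.

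First I would dispose of the subcase $h \ge 2$ by invoking Lemma \ref{lem-LIP-1'} directly: it already gives, for every $1 \le \ell \le h$, a permutation $\sigma$ with $\dim(C \cap (\sigma(C))^\perp) = \ell$. So the only genuinely new work is to reach $\ell = 0$, i.e.\ to show $(C, (\sigma(C))^\perp)$ can be made an LCP, and (if $h = 1$) to handle that base case which Lemma \ref{lem-LIP-1'} explicitly excludes. The cleanest route for $\ell = 0$ is to note that Theorem \ref{thm-LlIP} with $C_1 = C_2 = C$ applies: the hypothesis ${\bf 1} \notin \mathrm{Hull}(C) = C \cap C^\perp$ is exactly the condition ``${\bf 1} \notin C_1 \cap C_2^\perp$ and ${\bf 1} \notin C_2 \cap C_1^\perp$'' in that theorem (both conditions collapse to ${\bf 1} \notin C \cap C^\perp$ when $C_1 = C_2 = C$). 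Hence Theorem \ref{thm-LlIP} produces $\sigma \in {\rm {\bf PAut}}(\F_2^n)$ with $\dim(C \cap (\sigma(C))^\perp) = 0$, settling $\ell = 0$, and in particular settling the whole statement when $h = 1$ (the only values are $\ell \in \{0,1\}$, both now covered).

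It remains to assemble these pieces into a single clean argument covering all $0 \le \ell \le h$ without awkward case splits on $h$. I would organize it as: (i) $\ell = h$ via the identity; (ii) $1 \le \ell \le h-1$ via Lemma \ref{lem-LIP-1'} (this range is nonempty only when $h \ge 2$, which is precisely where that lemma applies, so there is no conflict); (iii) $\ell = 0$ via Theorem \ref{thm-LlIP} applied with $C_1 = C_2 = C$, using ${\bf 1} \notin \mathrm{Hull}(C)$. Each of the three invocations is immediate once the correspondence of hypotheses is spelled out.

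The only point requiring a moment's care — and the closest thing to an obstacle — is verifying that Lemma \ref{lem-LIP-1'}'s standing assumption $h \ge 2$ does not leave a gap: the middle range $1 \le \ell \le h-1$ is empty precisely when $h \le 1$, so whenever we actually need the lemma we have $h \ge 2$ and may use it. Similarly one should check that the $\ell = 0$ case via Theorem \ref{thm-LlIP} genuinely requires only ${\bf 1} \notin \mathrm{Hull}(C)$ and not some stronger condition: inspecting the statement of Theorem \ref{thm-LlIP}, both its hypotheses are ${\bf 1} \notin C \cap C^\perp$ when $C_1 = C_2 = C$, so the hypothesis of the present lemma is exactly what is needed. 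With these sanity checks in place the proof is a short three-line case analysis.
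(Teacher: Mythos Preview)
Your proposal is correct and follows essentially the same approach as the paper: the paper's proof is a two-line case split, invoking Lemma~\ref{lem-LIP-1'} for $1\le\ell\le h$ and Theorem~\ref{thm-LlIP} (with $C_1=C_2=C$) for $\ell=0$. Your three-way split ($\ell=h$, $1\le\ell\le h-1$, $\ell=0$) is just a slightly more cautious bookkeeping of the same argument, accounting explicitly for the stated hypothesis $h\ge2$ in Lemma~\ref{lem-LIP-1'}; in fact that lemma's proof already handles $h=1$ via the identity, so the paper simply cites it for the full range $1\le\ell\le h$.
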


\begin{proof}
If $1\leq \ell\leq h$, then the result holds by Lemma \ref{lem-LIP-1'}. The proof of the case where $\ell=0$ is obvious by Theorem \ref{thm-LlIP}.
\end{proof}

\begin{example}
Let $C$ be a binary linear $[7,3,4]$ code with the following generator matrix
$$G=\begin{pmatrix}
            1 & 0 & 0 & 1 & 1 & 0 & 1 \\
            0 & 1 & 0 & 1 & 0 & 1 & 1 \\
            0 & 0 & 1 & 0 & 1 & 1 & 1
          \end{pmatrix}.$$
It can be checked that $C$ is self-orthogonal, $i.e.,$ $C \cap C^\perp=C.$ By Lemma \ref{lem-LIP-1'}, let $\sigma_1\in {\rm {\bf PAut}}(\F_2^7)$ be defined by $\sigma_1(c_1,c_2,c_3,c_4,c_5,c_6,c_7)=(c_3,c_1,c_2,c_4,c_5,c_6,c_7).$ Then $\sigma_1(C)$ has a generator matrix as follows:
$$\sigma_1(G)=\begin{pmatrix}
            0 &1 & 0 & 1 & 1 & 0 & 1 \\
            0 &0 & 1 & 1 & 0 & 1 & 1 \\
            1 &0 & 0 & 0 & 1 & 1 & 1
          \end{pmatrix},$$
and we have
$$G(\sigma_1(G))^T=\begin{pmatrix}
                        1 & 1& 0 \\
                        0 & 1 & 1 \\
                        1 & 0 & 1
                      \end{pmatrix}.$$
It can be checked that ${\rm rank}(G(\sigma_1(G))^T)=2$ and $C \cap (\sigma_1(C))^\perp=C^\perp \cap \sigma_1(C)=\{(0000000),(1 1 1 0 0 0 1)\}$. By Theorem \ref{thm-LlIP} and Lemma \ref{lem-LIP-2}, let $\sigma_2\in {\rm {\bf PAut}}(\F_2^7)$ be defined by $\sigma_2(c_1,c_2,c_3,c_4,c_5,c_6,c_7)=(c_1,c_2,c_4,c_3,c_5,c_6,c_7).$ Then $\sigma_2\sigma_1(C)$ has a generator matrix as follows:
$$\sigma_2\sigma_1(G)=\begin{pmatrix}
            0 &1 & 1 & 0 & 1 & 0 & 1 \\
            0 &0 & 1 & 1 & 0 & 1 & 1 \\
            1 &0 & 0 & 0 & 1 & 1 & 1
          \end{pmatrix},$$
and we have
$$G(\sigma_2\sigma_1(G))^T=\begin{pmatrix}
                        1 & 0& 0 \\
                        0 & 0 & 1 \\
                        1 & 1 & 1
                      \end{pmatrix}.$$
Then ${\rm rank}(G(\sigma_2\sigma_1(G))^T)=3$ and $\dim(C_1\cap(\sigma_2\sigma_1(C))^\perp)=0$.
Let $\sigma=\sigma_2\sigma_1\in {\rm {\bf PAut}}(\F_2^7)$ be defined by $\sigma(c_1,c_2,c_3,c_4,c_5,c_6,c_7)=(c_3,c_1,c_4,c_2,c_5,c_6,c_7)$. Then $(C,(\sigma(C))^\perp)$ is an LCP of codes.
\end{example}

Carlet $et~al.$ \cite{Sigam-LCD} showed that $d_L(n,k)-1\leq d_{LCP}(n,k)\leq d_L(n,k)$. Hence an interesting topic is to study when this lower bound is reached. For this topic, Choi $et~al.$ \cite{2023CCDS-LCP} described a sufficient condition for $d_{LCP}(n,k)= d_L(n,k)-1$, and gave a conjecture on its necessary condition as follows.

\begin{conjecture}{\rm \cite[Conjecture 10]{2023CCDS-LCP}}
Let $n>k\geq 1$. Then, there exists a unique binary optimal $[n, k, d_L(n, k)]$ code $C$ which is even-like and contains ${\bf1}$ if and only if
$$d_{LCP}(n,k)=d_{L}(n,k)-1.$$
\end{conjecture}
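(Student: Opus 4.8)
The plan is to prove the two directions of the conjecture separately, exploiting the machinery already built in Theorem~\ref{thm-LlIP} and Lemma~\ref{lem-LIP-2}. The key reduction is this: by the result of Carlet $et~al.$ we already know $d_{LCP}(n,k)\in\{d_L(n,k)-1,\,d_L(n,k)\}$, so $d_{LCP}(n,k)=d_L(n,k)-1$ is equivalent to the statement that \emph{no} optimal $[n,k,d_L(n,k)]$ code $C_1$ can be paired (via a permutation $\sigma$) with an optimal $[n,k,d_L(n,k)]$ code $C_2$ so that $(C_1,(\sigma(C_2))^\perp)$ is an LCP with $d((\sigma(C_2))^\perp)\ge d_L(n,k)$ as well. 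Theorem~\ref{thm-LlIP} tells us exactly when a permutation equivalence can produce an LCP: the obstruction is ${\bf 1}\in C_1\cap C_2^\perp$ or ${\bf 1}\in C_2\cap C_1^\perp$. So I would first translate ``$d_{LCP}(n,k)=d_L(n,k)-1$'' into a clean statement about the all-one vector and the set $\mathcal{O}$ of optimal $[n,k,d_L(n,k)]$ codes.

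For the \emph{sufficiency} direction ($\Longleftarrow$), suppose there is a unique optimal code $C$, it is even-like, and ${\bf 1}\in C$. Take any two optimal $[n,k]$ codes $C_1,C_2$; by uniqueness $C_1$ and $C_2$ are both permutation-equivalent to $C$, hence ${\bf 1}\in C_1$ and ${\bf 1}\in C_2$ (the all-one vector is fixed by every permutation, and being even-like is a permutation invariant, so all equivalent copies contain ${\bf 1}$ and are even-like). Since $C_2$ is even-like, $C_2\subseteq \langle{\bf 1}\rangle^\perp$, and because ${\bf 1}\in C_2^\perp$ this gives ${\bf 1}\in C_2^\perp$; combined with ${\bf 1}\in C_1$ we get ${\bf 1}\in C_1\cap C_2^\perp$. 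By Theorem~\ref{thm-LlIP}, no $\sigma$ makes $(C_1,(\sigma(C_2))^\perp)$ an LCP. Since any LCP with security parameter $d_L(n,k)$ would require both $C_1$ optimal and $C_2^\perp$ of minimum distance $d_L(n,k)$, i.e.\ $C_2$ optimal (using $d_{LCP}(n,k)\le d_L(n,k)$ and the equivalence $(C_1,(\sigma(C_2))^\perp)$ up to permutation), we conclude $d_{LCP}(n,k)<d_L(n,k)$, hence $=d_L(n,k)-1$. A small point to nail down here is why a security-parameter-$d_L(n,k)$ LCP forces \emph{both} components optimal in the required sense — this follows because $d(C_1)\ge d_L(n,k)$ forces equality and $d((\sigma(C_2))^\perp)\ge d_L(n,k)$ forces $(\sigma(C_2))^\perp$, hence $C_2^\perp$, optimal; but the conjecture is stated with an optimal code of dimension $k$ containing ${\bf 1}$, so I must be careful to phrase everything symmetrically in $C_1$ and $C_2$ and invoke uniqueness on whichever of $C_1,C_2$ is forced to be optimal.

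For the \emph{necessity} direction ($\Longrightarrow$), suppose $d_{LCP}(n,k)=d_L(n,k)-1$. I want to show the optimal code is unique, even-like, and contains ${\bf 1}$. The contrapositive is cleaner: if \emph{either} there are two inequivalent optimal codes, \emph{or} there is an optimal code that is odd-like, \emph{or} there is an optimal code not containing ${\bf 1}$, then I must build an optimal LCP, i.e.\ find optimal $C_1,C_2$ and $\sigma$ with ${\bf 1}\notin C_1\cap C_2^\perp$ and ${\bf 1}\notin C_2\cap C_1^\perp$, invoking Theorem~\ref{thm-LlIP}. If some optimal code $C$ is odd-like, then $C\not\subseteq\langle{\bf1}\rangle^\perp$, so ${\bf1}\notin C^\perp$; taking $C_1=C_2=C$ we have ${\bf1}\notin C\cap C^\perp={\rm Hull}(C)$ and ${\bf1}\notin C^\perp\supseteq{\rm Hull}(C)$ — wait, I need ${\bf1}\notin C_1\cap C_2^\perp=C\cap C^\perp$, which holds since ${\bf1}\notin C^\perp$ — and then Lemma~\ref{lem-LIP-2} (or Theorem~\ref{thm-LlIP}) produces $\sigma$ with $(C,(\sigma(C))^\perp)$ an LCP, giving $d_{LCP}(n,k)=d_L(n,k)$, a contradiction. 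If every optimal code is even-like but some optimal code $C'$ does not contain ${\bf1}$, or if there are two inequivalent optimal codes $C,C'$, the idea is to use \emph{two different} optimal codes $C_1,C_2$: I want ${\bf1}\notin C_1\cap C_2^\perp$, which fails only if ${\bf1}\in C_1$ and ${\bf1}\in C_2^\perp$; by picking $C_1$ to be an optimal code not containing ${\bf1}$ (if one exists) or exploiting the extra freedom from a second equivalence class, I arrange that ${\bf1}$ lies outside at least one of $C_1$ and $C_2^\perp$, and symmetrically outside at least one of $C_2$ and $C_1^\perp$, then apply Theorem~\ref{thm-LlIP}.

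The main obstacle I anticipate is precisely this last case analysis in the necessity direction: handling ``two inequivalent even-like optimal codes, both containing ${\bf1}$'' — here neither code individually gives a free vector, and I need a genuinely relative argument showing that for \emph{some} pair of optimal codes $C_1\ne C_2$ one can still escape the all-one obstruction, or else I must show such a configuration cannot coexist with $d_{LCP}(n,k)=d_L(n,k)-1$. Resolving whether the right invariant is ``${\bf1}\in C$ for all optimal $C$'' versus ``${\bf1}\in{\rm Hull}(C)$ for all optimal $C$'', and pinning down the exact interplay with uniqueness, is the crux; the rest is bookkeeping with Theorem~\ref{thm-LlIP}, Corollary~\ref{cor-rank}, and the permutation-invariance of ${\bf1}$ and of the even-like property.
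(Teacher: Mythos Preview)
Your sufficiency argument ($\Leftarrow$) is fine in spirit and matches the paper, though the paper's version is more direct: given any $[n,k]$ LCP $(C,D)$ with security parameter $d_L(n,k)$, both $C$ and $D^\perp$ are optimal $[n,k]$ codes, hence (by hypothesis) even-like and containing ${\bf 1}$; since $D^\perp$ is even-like we get ${\bf 1}\in D$, so ${\bf 1}\in C\cap D$, contradicting $C\cap D=\{\bf 0\}$. There is no need to route through Theorem~\ref{thm-LlIP} here.

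For necessity ($\Rightarrow$), your argument that every optimal code must be even-like and contain ${\bf 1}$ is exactly right and is what the paper does: if some optimal $C$ has ${\bf 1}\notin{\rm Hull}(C)$ (which happens whenever $C$ is odd-like or ${\bf 1}\notin C$), Lemma~\ref{lem-LIP-2} gives $\sigma$ with $(C,(\sigma(C))^\perp)$ an LCP of security parameter $d_L(n,k)$, a contradiction.

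The gap you flagged at the end --- two inequivalent optimal codes, both even-like and both containing ${\bf 1}$ --- is real, and the paper does \emph{not} close it. The paper does not prove the conjecture as literally stated; instead it proves the corrected statement (Theorem~\ref{thm-2optimalLCP}): $d_{LCP}(n,k)=d_L(n,k)-1$ if and only if \emph{every} optimal $[n,k,d_L(n,k)]$ code is even-like and contains ${\bf 1}$. Uniqueness is dropped. The sentence immediately preceding Theorem~\ref{thm-2optimalLCP} makes this explicit. So you should stop trying to squeeze out uniqueness --- your difficulty there is not a missing trick, it is the reason the paper restates the characterization without the word ``unique''.
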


In fact, we see in the subsequent proof that $d_{LCP}(n,k)=d_{L}(n,k)-1$ if and only if every binary optimal $[n, k, d_L(n, k)]$ code is even-like and contains ${\bf1}$.

\begin{theorem}\label{thm-2optimalLCP}
Let $n>k\geq 1$. Then, $d_{LCP}(n,k)=d_{L}(n,k)-1$ if and only if every binary optimal $[n, k, d_L(n, k)]$ code is even-like and contains ${\bf1}$.
\end{theorem}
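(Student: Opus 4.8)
The plan is to establish the two implications separately, using the characterization of binary LCPs from Theorem~\ref{thm-LlIP} (applied to the case $C_1=C_2=C$) as the central tool. Recall that, by Theorem~\ref{thm-LlIP} with $C_1=C_2$, for a binary linear $[n,k]$ code $C$ there exists $\sigma\in{\rm{\bf PAut}}(\F_2^n)$ making $(C,(\sigma(C))^\perp)$ an LCP of codes if and only if ${\bf1}\notin{\rm Hull}(C)$. Moreover, if $(C,(\sigma(C))^\perp)$ is an LCP of codes, then $\sigma(C)$ is an $[n,k,d(C)]$ code (permutation equivalence preserves the weight distribution), $(\sigma(C))^\perp$ is an $[n,n-k]$ code, and the security parameter of this pair is $\min\{d(C),d((\sigma(C))^\perp)^\perp)\}=\min\{d(C),d(\sigma(C))\}=d(C)$. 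Hence $d_{LCP}(n,k)=\max\{d(C) : C \text{ is a binary } [n,k] \text{ code with } {\bf1}\notin{\rm Hull}(C)\}$. This reformulation is the crux; once it is in hand both directions are short.

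For the direction ``$d_{LCP}(n,k)=d_L(n,k)-1$ $\Longrightarrow$ every optimal code is even-like and contains ${\bf1}$'': I argue by contraposition. Suppose some binary optimal $[n,k,d_L(n,k)]$ code $C$ fails to be even-like or fails to contain ${\bf1}$; I must produce an optimal code $C'$ with ${\bf1}\notin{\rm Hull}(C')$, which by the reformulation forces $d_{LCP}(n,k)=d_L(n,k)$. If $C$ is odd-like, then ${\bf1}\notin C^\perp$ (the all-one vector is orthogonal to every codeword iff every codeword is even-like), so ${\bf1}\notin C\cap C^\perp={\rm Hull}(C)$ and we may take $C'=C$. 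If instead $C$ is even-like but ${\bf1}\notin C$, then again ${\bf1}\notin{\rm Hull}(C)$ and $C'=C$ works. The only remaining case is $C$ even-like with ${\bf1}\in C$, i.e. ${\bf1}\in{\rm Hull}(C)$; but this case is excluded because, by hypothesis of the contrapositive, \emph{some} optimal code is not of this type, and we chose $C$ to be that one. Thus in every case $d_{LCP}(n,k)=d_L(n,k)$, completing the contrapositive.

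For the converse ``every optimal $[n,k,d_L(n,k)]$ code is even-like and contains ${\bf1}$ $\Longrightarrow$ $d_{LCP}(n,k)=d_L(n,k)-1$'': By the reformulation, $d_{LCP}(n,k)=d_L(n,k)$ would require an optimal code $C$ with ${\bf1}\notin{\rm Hull}(C)$. But every optimal code $C$ is even-like, so ${\bf1}\in C^\perp$, and contains ${\bf1}$, so ${\bf1}\in C$; hence ${\bf1}\in{\rm Hull}(C)$ for every optimal $C$, a contradiction. Therefore $d_{LCP}(n,k)<d_L(n,k)$, and combined with the known bound $d_{LCP}(n,k)\ge d_L(n,k)-1$ (from the $\F_2^{n+1}$ construction, i.e. Theorem~\ref{thm-binaryLCP}, which yields a code of length $n$ and minimum distance at least $d_L(n,k)-1$ after puncturing), we get $d_{LCP}(n,k)=d_L(n,k)-1$.

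The step I expect to require the most care is the reformulation $d_{LCP}(n,k)=\max\{d(C): {\bf1}\notin{\rm Hull}(C)\}$, specifically verifying that the security parameter of the LCP $(C,(\sigma(C))^\perp)$ produced by Theorem~\ref{thm-LlIP} is exactly $d(C)$ and not something smaller — this uses that $d((\sigma(C))^\perp{}^\perp)=d(\sigma(C))=d(C)$ — together with the ``$\le$'' direction, namely that \emph{any} $[n,k]$ LCP $(C_1,C_2)$ already has $d(C_1)\le d_L(n,k)$ and forces ${\bf1}\notin{\rm Hull}(C_1)$ (since ${\bf1}\in{\rm Hull}(C_1)$ would give $C_1\cap C_1^\perp\ni{\bf1}$, but one still needs to relate $C_1\cap C_1^\perp$ to $C_1\cap C_2^\perp$ via the fact that $d_{LCP}$ optimizes over both $C_1$ and $C_2$, so without loss of generality one may take $C_2=\sigma(C_1)^\perp$-type pairs or argue directly that for the optimizing pair $d(C_1)$ is what matters). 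Handling this ``$\le$'' inequality cleanly — showing no LCP can beat the bound $d_L(n,k)-1$ once all optimal codes lie in ${\rm Hull}$ — is where I would be most careful, and it is exactly what the contrapositive in the second paragraph is designed to package.
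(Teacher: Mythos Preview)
Your $\Longrightarrow$ direction (second paragraph) is correct and is essentially the paper's argument: if some optimal $[n,k,d_L(n,k)]$ code $C$ is odd-like or omits ${\bf1}$, then ${\bf1}\notin{\rm Hull}(C)$, so Theorem~\ref{thm-LlIP} (equivalently Lemma~\ref{lem-LIP-2}) produces an LCP $(C,(\sigma(C))^\perp)$ with security parameter $d(C)=d_L(n,k)$.

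For $\Longleftarrow$ there is a genuine gap. Your argument rests on the reformulation $d_{LCP}(n,k)=\max\{d(C):{\bf1}\notin{\rm Hull}(C)\}$, but you have only justified the inequality $\ge$. Your attempted justification of $\le$ in the final paragraph---that an arbitrary LCP $(C_1,C_2)$ ``forces ${\bf1}\notin{\rm Hull}(C_1)$''---is simply false: there is no reason the first component of an LCP must have ${\bf1}\notin{\rm Hull}(C_1)$, and the ``without loss of generality $C_2=(\sigma(C_1))^\perp$'' reduction is not legitimate, since $d_{LCP}$ ranges over \emph{all} LCPs, not just those of this symmetric form.

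The fix---and this is exactly what the paper does, bypassing the reformulation entirely---is to use the hypothesis on \emph{both} components. Suppose $(C,D)$ is an $[n,k]$ LCP with security parameter $d_L(n,k)$. Then $C$ and $D^\perp$ are both optimal $[n,k,d_L(n,k)]$ codes, so by hypothesis each is even-like and contains ${\bf1}$. From ${\bf1}\in C$ and from $D^\perp$ even-like (equivalently ${\bf1}\in D$) one gets ${\bf1}\in C\cap D=\{\bf0\}$, a contradiction. Hence $d_{LCP}(n,k)\le d_L(n,k)-1$, and the known lower bound $d_{LCP}(n,k)\ge d_L(n,k)-1$ (which is \cite[Corollary~9]{Sigam-LCD}, not a consequence of Theorem~\ref{thm-binaryLCP} via puncturing as you wrote) finishes the argument. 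If you prefer to salvage your reformulation, the same observation works: for any LCP $(C_1,C_2)$, at least one of $C_1,C_2^\perp$ has ${\bf1}\notin{\rm Hull}$ (else ${\bf1}\in C_1\cap C_2$), and $\min\{d(C_1),d(C_2^\perp)\}$ is bounded by the minimum distance of that one.
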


\begin{proof}
$(\Longleftarrow)$ Suppose that $d_{LCP}(n,k)=d_{L}(n,k)$, then there exists a binary $[n,k]$ LCP of codes $(C,D)$ with security parameter $d_L(n, k)$. Hence $C\cap D=\{{\bf 0}\}$, $C$ and $D^\perp$ have parameters $[n,k,d_L(n,k)]$.
By hypothesis, every binary optimal $[n, k, d_L(n, k)]$ code is even-like and contains ${\bf1}$. This implies that both $C$ and $D^\perp$ are even-like and contain ${\bf1}$. Since $D^\perp$ is even-like, we have ${\bf 1}\in D$. Then ${\bf 1}\in C\cap D$. This contradicts the condition that $C\cap D=\{{\bf 0}\}$. Hence, $d_{LCP}(n,k)\leq d_{L}(n,k)-1$. On the other hand, by \cite[Corollary 9]{Sigam-LCD}, we have $d_{LCP}(n,k)\geq d_{L}(n,k)-1$. Therefore, $d_{LCP}(n,k)= d_{L}(n,k)-1$.

$(\Longrightarrow)$ Suppose that $d_{LCP}(n,k)=d_{L}(n,k)-1$. If there exists a binary linear $[n, k, d_L(n, k)]$ code $C$ such that $C$ is odd-like or does not contain ${\bf1}$, then both cases imply that ${\bf 1}\notin {\rm Hull}(C)$. By Lemma \ref{lem-LIP-2}, there exists a mapping $\sigma\in {\rm {\bf PAut}}(\F_2^n)$ such that $(C,(\sigma(C))^\perp)$ is an LCP of codes. This implies that $d_{LCP}(n,k)=d_{L}(n,k)$, which contradicts the condition that $d_{LCP}(n,k)=d_{L}(n,k)-1$.
\end{proof}

\begin{cor}\label{cor-LCP}
If there exists a binary linear $[n,k,d]$ code $C$ with ${\bf1}\notin {\rm Hull}(C)$, then there exists a binary $[n,k]$ LCP of codes with security parameter $d$.
\end{cor}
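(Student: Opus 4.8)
The plan is to derive Corollary \ref{cor-LCP} as an immediate consequence of Lemma \ref{lem-LIP-2}. Suppose $C$ is a binary linear $[n,k,d]$ code with ${\bf 1}\notin {\rm Hull}(C)$, and set $h=\dim({\rm Hull}(C))$. Applying Lemma \ref{lem-LIP-2} with $\ell=0$ yields a mapping $\sigma\in {\rm {\bf PAut}}(\F_2^n)$ such that $\dim(C\cap(\sigma(C))^\perp)=0$, i.e., $(C,(\sigma(C))^\perp)$ is an LCP of codes. (When $h\geq 2$ this is via Lemma \ref{lem-LIP-1'}; when $h\leq 1$ the case $\ell=0$ of Lemma \ref{lem-LIP-2} still applies through Theorem \ref{thm-LlIP} with $C_1=C_2=C$, using ${\bf 1}\notin {\rm Hull}(C)={\rm Hull}(C)=C\cap C^\perp$, so the hypotheses ${\bf 1}\notin C\cap C^\perp$ of Theorem \ref{thm-LlIP} are met.)

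It then remains to check that this LCP has the right security parameter. Since $\sigma$ is a permutation, $(\sigma(C))^\perp=\sigma(C^\perp)$, and permutations preserve Hamming weights, so $d((\sigma(C))^\perp)=d(\sigma(C^\perp))=d(C^\perp)$. Now for an LCP of codes $(C_1,C_2)$ the security parameter is $\min\{d(C_1),d(C_2^\perp)\}$; here $C_1=C$ and $C_2=(\sigma(C))^\perp$, so $C_2^\perp=\sigma(C)$ and $d(C_2^\perp)=d(\sigma(C))=d(C)=d$. Hence the security parameter of $(C,(\sigma(C))^\perp)$ equals $\min\{d(C),d(C)\}=d$, as claimed.

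The whole argument is essentially a bookkeeping exercise once Lemma \ref{lem-LIP-2} is in hand; there is no genuine obstacle. The only point that needs a moment's care is matching the definition of the security parameter: one must remember that it is $\min\{d(C_1),d(C_2^\perp)\}$ rather than $\min\{d(C_1),d(C_2)\}$, and verify that both $d(C_1)$ and $d(C_2^\perp)$ here reduce to $d(C)$ under the permutation $\sigma$. Once that is observed, the corollary follows in a single line from the $\ell=0$ case of Lemma \ref{lem-LIP-2}.
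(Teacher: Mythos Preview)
Your proof is correct and follows exactly the paper's approach: apply Lemma~\ref{lem-LIP-2} with $\ell=0$ to obtain $\sigma\in{\rm {\bf PAut}}(\F_2^n)$ making $(C,(\sigma(C))^\perp)$ an LCP, then observe that the security parameter is $\min\{d(C),d(\sigma(C))\}=d(C)$ since permutations preserve minimum distance. Your write-up is slightly more detailed (unpacking the $C_2^\perp=\sigma(C)$ identification and commenting on how Lemma~\ref{lem-LIP-2} routes through Theorem~\ref{thm-LlIP} for $\ell=0$), but the argument is the same.
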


\begin{proof}
By Lemma \ref{lem-LIP-2}, there exists a mapping $\sigma\in {\rm {\bf PAut}}(\F_2^n)$ such that $(C,(\sigma(C))^\perp)$ is a binary $[n,k]$ LCP of codes. The security parameter of $(C,(\sigma(C))^\perp)$ is equal to $\min\{d(C),d(\sigma(C))\}$. Together with the fact that $d(\sigma(C))=d(C)$, the result holds.
\end{proof}

\subsection{Binary LCPs of codes from Solomon-Stiffler codes}
Very recently, G${\rm\ddot{u}}$neri \cite{2023ITSSLCP} constructed an infinite family of optimal binary LCPs of codes from Solomon-Stiffler codes. Here, we construct an infinite family of optimal binary LCPs of codes from Solomon-Stiffler codes, whose parameters cover those constructed by G${\rm\ddot{u}}$neri \cite{2023ITSSLCP}. We first recall the concept of Solomon-Stiffler codes.

Assume that $S_k$ is a $k\times (2^k-1)$ matrix whose columns are made up of all nonzero vectors of $\F_2^k$.
Let $sS_k$ denote $s$ copies of a matrix $S_k$, in other words,
$$sS_k=(\underbrace{S_k~|~S_k~|~\cdots~|~S_k}_s).$$

If $U$ and $V$ are matrices with the same number of rows and $V$ is a submatrix of $U$, then we denote by $(U\setminus V)$ the matrix obtained by puncturing the columns of $V$ from $U$. Let $U$ be a $u$-dimensional subspace of $\F_2^k$. We denote the $k \times (2^u -1)$ matrix of rank $u$, whose columns consist of nonzero elements of the space $U$, also by $U$. It is clear that the matrix $U$ is contained in $S_k$ for $1\leq u< k$.

Let $G$ be a generator matrix of a binary $[n,k,d]$ linear code $C$ with $d(C^\perp)\geq 2$.
Then there exists an integer $s$ such that $G$ is a submatrix of $sS_k$. We define a $k\times (s(2^k-1)-n)$ matrix $G'$ by
\begin{align}\label{eq-anticode}
 G'=(sS_k \setminus G).
\end{align}
Then the rows of the matrix $G'$ generate a code $C'$, which is called the anticode of $C$, namely,
$C'=\{{\bf c}'~|~{\bf c}'={\bf x}G',~{\bf x}\in \F_2^k\}.$
For ${\bf c}={\bf x}G\in C$, we define ${\bf c'}={\bf x}G'\in C'$. Since the rows of $sS_k$ generate a one-weight $[s(2^k-1),k,s2^{k-1}]$ code, by (\ref{eq-anticode}) we have
\begin{align}\label{eq-1}
  {\rm wt}({\bf c})+{\rm wt}({\bf c'})=s2^{k-1},~{\bf c}\in C\backslash \{{\bf 0}\}.
\end{align}

Given $k$ and $d$, we define $s=\left\lceil\frac{d}{2^{k-1}}\right\rceil$. Then $s2^{k-1}-d< 2^{k-1}$ and
there exist positive integers $u_1,u_2,\ldots,u_p$ such that $k>u_1> u_2>\cdots >u_p\geq 1$ and
 $$s2^{k-1}-d=\sum_{i=1}^p2^{u_i-1}.$$

We define
$\mathcal{U}(k,u)=\left\{U~|~U=\widehat{U}\backslash \{{\bf 0}\},~\widehat{U}~{\rm is~a}~u\right.$-dimensional subspace of $\left.\F_2^k\right\}.$
In 1965, Solomon and Stiffler~\cite{SS-code} found a family of Griesmer codes by specifying $G'$ as follows:
\begin{align*}
  G'=\left(U_1~|~U_2~|~\cdots~|~U_p\right),
\end{align*}
where $U_i\in \mathcal{U}(k,u_i)$, $k>u_1> u_2>\cdots >u_p\geq 1$, and $U_i\cap U_j=\emptyset$ when $i\neq j$.
Note that the rows of the matrix $G'$ generate a binary code $C'$ with the maximum weight $\sum_{i=1}^p 2^{u_i-1}$. By (\ref{eq-1}) and the Griesmer bound,
the code $C$ with the anticode $C'$ is a binary linear code with the parameters
\begin{align}\label{eq-2}
\left[s(2^k-1)-\sum_{i=1}^p(2^{u_i}-1),k,s2^{k-1}-\sum_{i=1}^p2^{u_i-1}\right].
\end{align}
Then $C$ is called a binary Solomon-Stiffler code~\cite{SS-code}.
Belov \cite{Belov-1} showed that if $\sum_{i=1}^{\min\{s+1, p\}}u_i\leq sk$, then we can choose $U_i\in \mathcal{U}(k,u_i)~(1\leq i\leq p)$ appropriately to ensure the existence of an Solomon-Stiffler code with the parameters $(\ref{eq-2})$.

\begin{cor}\label{cor-SS-LCP}
Assume that $s\geq 2$ and $k\geq 2$. If $\sum_{i=1}^{\min\{s+1, p\}}u_i\leq sk$, then there exists a binary $[g(k,d),k]$ LCP of codes with security parameter $d=s2^{k-1}-\sum_{i=1}^p2^{u_i-1}$.
\end{cor}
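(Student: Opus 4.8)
The plan is to use Corollary~\ref{cor-LCP}, which reduces the problem to exhibiting a binary $[g(k,d),k,d]$ code $C$ with ${\bf 1}\notin {\rm Hull}(C)$. Since $d=s2^{k-1}-\sum_{i=1}^p 2^{u_i-1}$ and $s=\lceil d/2^{k-1}\rceil$, the parameters in~(\ref{eq-2}) simplify: because $sk\geq\sum_{i=1}^{\min\{s+1,p\}}u_i$, Belov's theorem guarantees a Solomon-Stiffler code $C$ with length $s(2^k-1)-\sum_{i=1}^p(2^{u_i}-1)$, and one checks this length equals $g(k,d)$ (this is the standard fact that Solomon--Stiffler codes are Griesmer codes, which follows from~(\ref{eq-1}) together with the Griesmer bound as indicated in the excerpt). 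So $C$ is a binary $[g(k,d),k,d]$ Griesmer code, and it remains only to verify that ${\bf 1}\notin{\rm Hull}(C)=C\cap C^\perp$.

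For this the natural route is to show directly that ${\bf 1}\notin C$. Recall $C$ has generator matrix $G$ with $(sS_k\setminus G)=G'=(U_1\mid U_2\mid\cdots\mid U_p)$, where $U_i$ ranges over the nonzero vectors of a $u_i$-dimensional subspace $\widehat U_i$ with $u_i<k$. Suppose for contradiction that ${\bf 1}\in C$, say ${\bf 1}={\bf x}G$ for some ${\bf x}\in\F_2^k$; then by~(\ref{eq-1}) the corresponding anticode word ${\bf c}'={\bf x}G'$ has weight $s2^{k-1}-{\rm wt}({\bf 1})=s2^{k-1}-g(k,d)+$ (a correction), more usefully ${\rm wt}({\bf c}')=s2^{k-1}-{\rm wt}({\bf c})$ with ${\bf c}={\bf 1}$, i.e. ${\rm wt}({\bf c}')=s2^{k-1}-\big(s(2^k-1)-\sum(2^{u_i}-1)\big)$. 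On the other hand ${\bf x}\neq{\bf 0}$, and the weight of ${\bf x}G'$ is $\sum_{i:\,{\bf x}\notin\widehat U_i^{\perp\text{-ish restriction}}}2^{u_i-1}$ — more precisely, within each block $U_i$ the vector ${\bf x}$ acts as a linear functional on $\widehat U_i$, contributing either $0$ (if ${\bf x}$ vanishes on $\widehat U_i$) or exactly $2^{u_i-1}$ (half the nonzero vectors). Comparing this with the value forced by ${\bf c}={\bf 1}$ and using $d=s2^{k-1}-\sum 2^{u_i-1}<s2^{k-1}$ together with $s\geq 2$, $k\geq 2$, one deduces that ${\bf x}$ must vanish on every $\widehat U_i$; but then ${\bf c}'={\bf 0}$, so ${\rm wt}({\bf c})=s2^{k-1}$, forcing $s(2^k-1)-\sum(2^{u_i}-1)=s2^{k-1}$, i.e. $g(k,d)=s2^{k-1}$ and hence $d=s2^{k-1}=n$, contradicting $n>d$ (a code of length equal to its minimum distance and dimension $k\geq 2$ cannot be Griesmer unless $k=1$). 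Thus ${\bf 1}\notin C$, so certainly ${\bf 1}\notin{\rm Hull}(C)$.

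With ${\bf 1}\notin{\rm Hull}(C)$ established, Corollary~\ref{cor-LCP} immediately produces a binary $[g(k,d),k]$ LCP of codes with security parameter $d$, which is the assertion of the corollary.

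The main obstacle I anticipate is the bookkeeping in the second paragraph: pinning down exactly the weight of ${\bf x}G'$ block by block and checking that the hypotheses $s\geq 2$, $k\geq 2$ rule out all the degenerate configurations. A cleaner alternative, if the functional-counting argument gets unwieldy, is to note that a binary Griesmer $[n,k,d]$ code containing ${\bf 1}$ with $k\geq 2$ must have $d$ even (since shortening/residual arguments on Griesmer codes force specific divisibility, and ${\bf 1}\in C$ makes $C$ contain the even subcode of index $2$ structure); then one simply checks $d=s2^{k-1}-\sum 2^{u_i-1}$ need not be even in general, but when it is even one falls back on the direct argument — so the truly robust path is the direct ${\bf 1}\notin C$ computation above, and that is where the real work lies.
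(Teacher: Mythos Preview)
Your overall strategy matches the paper's: exhibit a Solomon--Stiffler code $C$ with the parameters~(\ref{eq-2}), show ${\bf 1}\notin C$ (hence ${\bf 1}\notin{\rm Hull}(C)$), and invoke Corollary~\ref{cor-LCP}. The gap is in your middle paragraph. Having correctly obtained ${\rm wt}({\bf c}')=s2^{k-1}-n$ from~(\ref{eq-1}), you then assert ``one deduces that ${\bf x}$ must vanish on every $\widehat U_i$'' without justification, and from ${\bf c}'={\bf 0}$ you conclude ``$d=s2^{k-1}=n$'', which does not follow: ${\bf c}'={\bf 0}$ gives $n=s2^{k-1}$, but $d=s2^{k-1}-\sum 2^{u_i-1}$, so $d=n$ would force $p=0$, in which case $n=s(2^k-1)\neq s2^{k-1}$. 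Your suggested ``cleaner alternative'' (that a binary Griesmer code containing ${\bf 1}$ must have $d$ even) is also false: $\F_2^n$ is a Griesmer $[n,n,1]$ code containing ${\bf 1}$.

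The fix is immediate and is exactly what the paper does: simply observe that $s2^{k-1}-n<0$, which is a contradiction since ${\rm wt}({\bf c}')\geq 0$. Indeed, $n=s(2^k-1)-\sum_{i=1}^p(2^{u_i}-1)$ and the number of punctured coordinates satisfies
\[
\sum_{i=1}^p(2^{u_i}-1)\ \leq\ \sum_{j=1}^{k-1}(2^j-1)=2^k-k-1\ <\ 2^k-2\ \leq\ s(2^{k-1}-1)=s2^{k-1}-s
\]
for $k\geq 2$ and $s\geq 2$, so $n>s(2^k-1)-(s2^{k-1}-s)=s2^{k-1}$. Equivalently, every nonzero codeword in the parent code generated by $sS_k$ has weight exactly $s2^{k-1}$, so no punctured code of length exceeding $s2^{k-1}$ can contain ${\bf 1}$. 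Once you insert this one-line inequality in place of your unjustified deduction, the proof is complete and coincides with the paper's.
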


\begin{proof}
It is well-known that the rows of the matrix $sS_k$ generate a one-weight $[s(2^k-1),k,s2^{k-1}]$ Griesmer code $C$. Hence we can obtain a punctured code of $C$ which contains ${\bf 1}$ by removing at least $s(2^k-1)-s2^{k-1}=s2^{k-1}-s$ positions. Note that the Solomon-Stiffler code with the parameters $(\ref{eq-2})$ is a punctured code of $C$ by puncturing at $\sum_{i=1}^p(2^{u_i}-1)$ positions. However,
$$\sum_{i=1}^p(2^{u_i}-1)\leq \sum_{j=1}^{k-1}(2^{j}-1)=2^k-k-1\leq s2^{k-1}-s.$$
Hence the Solomon-Stiffler code with the parameters $(\ref{eq-2})$ does not contain ${\bf 1}$. By Corollary \ref{cor-LCP}, the result holds.
\end{proof}

From this, we can obtain the main result of \cite{2023ITSSLCP}.

\begin{cor}{\rm\cite[Theorem 4]{2023ITSSLCP}}
For any $k\geq 5$ and $d\geq \left\lceil \frac{k-1}{2}\right\rceil2^{k-1}$, there exists a binary $[g(k,d),k]$ LCP of codes with security parameter $d$.
\end{cor}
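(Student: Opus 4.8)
The plan is to deduce the final corollary from Corollary \ref{cor-SS-LCP} by verifying that the hypotheses of the latter are met for every pair $(k,d)$ with $k\geq 5$ and $d\geq \left\lceil\frac{k-1}{2}\right\rceil2^{k-1}$. First I would fix such a pair $(k,d)$ and set $s=\left\lceil\frac{d}{2^{k-1}}\right\rceil$, so that $s\geq \left\lceil\frac{k-1}{2}\right\rceil\geq 2$, and write $s2^{k-1}-d=\sum_{i=1}^p2^{u_i-1}$ with $k>u_1>u_2>\cdots>u_p\geq 1$ as in the Solomon--Stiffler setup preceding Corollary \ref{cor-SS-LCP}. Since the $u_i$ are distinct integers strictly between $0$ and $k$, the largest possible value of any partial sum $\sum_{i=1}^{t}u_i$ is obtained by taking the $t$ largest admissible values, namely $(k-1)+(k-2)+\cdots+(k-t)=tk-\binom{t+1}{2}$.

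Next I would bound $\sum_{i=1}^{\min\{s+1,p\}}u_i$. Writing $t=\min\{s+1,p\}\leq s+1$, the estimate above gives $\sum_{i=1}^{t}u_i\leq tk-\binom{t+1}{2}\leq tk$, and one wants $tk\leq sk$, i.e. $t\leq s$; but $t$ could equal $s+1$, so the crude bound $tk\leq sk$ fails by exactly $k$ in the worst case. The point is that the $\binom{t+1}{2}$ term saves us: when $t=s+1$ we need $\sum_{i=1}^{s+1}u_i\leq (s+1)k-\binom{s+2}{2}\leq sk$, i.e. $k\leq \binom{s+2}{2}=\frac{(s+2)(s+1)}{2}$. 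Since $s\geq\left\lceil\frac{k-1}{2}\right\rceil\geq\frac{k-1}{2}$, we get $\frac{(s+2)(s+1)}{2}\geq\frac{1}{2}\cdot\frac{k+3}{2}\cdot\frac{k+1}{2}=\frac{(k+3)(k+1)}{8}$, and a short check shows $\frac{(k+3)(k+1)}{8}\geq k$ for all $k\geq 5$ (indeed $(k+3)(k+1)=k^2+4k+3\geq 8k\iff k^2-4k+3\geq 0\iff (k-1)(k-3)\geq 0$, true for $k\geq 3$). Hence $\sum_{i=1}^{\min\{s+1,p\}}u_i\leq sk$ in all cases, and in the case $t\leq s$ the bound $tk\leq sk$ is immediate.

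With the hypothesis $\sum_{i=1}^{\min\{s+1,p\}}u_i\leq sk$ verified, Corollary \ref{cor-SS-LCP} applies directly and yields a binary $[g(k,d),k]$ LCP of codes with security parameter $d=s2^{k-1}-\sum_{i=1}^p2^{u_i-1}$, which is the desired statement. I expect the only real obstacle to be the arithmetic bookkeeping in the inequality $\sum_{i=1}^{s+1}u_i\leq sk$: one must be careful to track the two regimes $p\leq s$ and $p\geq s+1$ separately, and to extract enough from $s\geq\left\lceil\frac{k-1}{2}\right\rceil$ to absorb the extra factor of $k$ that appears when $\min\{s+1,p\}=s+1$. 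Everything else is a routine application of the Solomon--Stiffler construction and Corollary \ref{cor-LCP} as already packaged in Corollary \ref{cor-SS-LCP}.
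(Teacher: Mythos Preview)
Your proposal is correct and follows the same route as the paper: verify the hypotheses of Corollary~\ref{cor-SS-LCP} (namely $s\geq 2$ and the Belov inequality $\sum_{i=1}^{\min\{s+1,p\}}u_i\leq sk$) and then invoke it. The only difference is that the paper outsources the Belov inequality to the literature, citing Baumert--McEliece~\cite{1973ITSS} for the existence of a Solomon--Stiffler $[g(k,d),k]$ code whenever $k\geq 5$ and $d\geq\lceil(k-1)/2\rceil 2^{k-1}$, whereas you give a self-contained arithmetic verification via $\sum_{i=1}^{t}u_i\leq tk-\binom{t+1}{2}$ and the check $k\leq\binom{s+2}{2}$ when $t=s+1$. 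Your elementary argument is a nice bonus, making the corollary independent of~\cite{1973ITSS}, but structurally the two proofs are the same.
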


\begin{proof}
By \cite{1973ITSS}, there exists a binary Solomon-Stiffler $[g(k,d),k]$ code for any $k\geq 5$ and $d\geq \left\lceil \frac{k-1}{2}\right\rceil2^{k-1}$.
Note that
$$s=\left\lceil\frac{d}{2^{k-1}}\right\rceil\geq \left\lceil \frac{k-1}{2}\right\rceil \geq 2.$$
By Corollary \ref{cor-SS-LCP}, the result holds.
\end{proof}

\begin{cor}\label{cor-LCP-1}
Suppose that $n=2^k$, where $k\geq 1$. Then we have $d_{LCP}(n,k+1)=d_{L}(n,k+1)-1=2^{k-1}-1.$
\end{cor}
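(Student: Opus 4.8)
The plan is to invoke Theorem~\ref{thm-2optimalLCP}. I would first record that $d_L(2^k,k+1)=2^{k-1}$: the code $\mathcal R(1,k)$ has parameters $[2^k,k+1,2^{k-1}]$, so $d_L(2^k,k+1)\ge 2^{k-1}$, while $g(k+1,2^{k-1}+1)=\sum_{i=0}^{k}\lceil(2^{k-1}+1)/2^i\rceil=2^k+k>2^k$ rules out a $[2^k,k+1,2^{k-1}+1]$ code. Since $2^k>k+1$ for $k\ge 2$, Theorem~\ref{thm-2optimalLCP} applies and reduces the claim to: \emph{every} binary $[2^k,k+1,2^{k-1}]$ code is even-like and contains $\mathbf 1$. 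As these two properties are permutation-invariant and $\mathcal R(1,k)$ has weight enumerator $1+(2^{k+1}-2)x^{2^{k-1}}+x^{2^k}$ (all weights even when $k\ge 2$) and contains $\mathbf 1$, it is enough to prove that any such code $C$ is permutation-equivalent to $\mathcal R(1,k)$. The case $k=1$ I would dispose of by direct inspection.

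So let $C$ be a binary $[2^k,k+1,2^{k-1}]$ code with generator matrix $G$. First, $G$ has no zero column: such a column would present $C$ as a $[2^k-1,k+1,2^{k-1}]$ code, contradicting $g(k+1,2^{k-1})=2^k$. Next I would argue ``projectively''. For each nonzero $P\in\F_2^{k+1}$ let $m_P$ be the number of columns of $G$ equal to $P$, so $\sum_P m_P=2^k$; and for each nonzero $\mathbf a\in\F_2^{k+1}$ put $H_{\mathbf a}=\{P\ne\mathbf 0:\langle\mathbf a,P\rangle=0\}$ and $x_{\mathbf a}=\sum_{P\in H_{\mathbf a}}m_P$, so that ${\rm wt}(\mathbf aG)=2^k-x_{\mathbf a}$. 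The minimum-distance condition is exactly $x_{\mathbf a}\le 2^{k-1}$ for all $\mathbf a$, hence $y_{\mathbf a}:=2^{k-1}-x_{\mathbf a}\ge 0$. Using that each nonzero $P$ lies in $2^k-1$ of the $2^{k+1}-1$ sets $H_{\mathbf a}$, and each pair of distinct nonzero $P,Q$ in $2^{k-1}-1$ of them, one gets $\sum_{\mathbf a}x_{\mathbf a}=2^k(2^k-1)$ and $\sum_{\mathbf a}x_{\mathbf a}^2=2^{k-1}\sum_P m_P^2+(2^{k-1}-1)2^{2k}$; writing $\sum_P m_P^2=2^k+e$ with $e\ge 0$ and simplifying then yields $\sum_{\mathbf a}y_{\mathbf a}=2^{k-1}$ and $\sum_{\mathbf a}y_{\mathbf a}^2=(2^{k-1})^2+2^{k-1}e$.

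The crux is the inequality $\sum_{\mathbf a}y_{\mathbf a}^2\le(\sum_{\mathbf a}y_{\mathbf a})^2=(2^{k-1})^2$, valid because the $y_{\mathbf a}$ are nonnegative; it forces $e=0$, hence all $m_P\in\{0,1\}$ (the columns of $G$ are distinct), and the equality case of $\sum y_{\mathbf a}^2=(\sum y_{\mathbf a})^2$ forces exactly one $\mathbf a_0$ with $y_{\mathbf a_0}\ne 0$, necessarily $y_{\mathbf a_0}=2^{k-1}$, i.e.\ $x_{\mathbf a_0}=0$. Thus no column of $G$ lies in $H_{\mathbf a_0}$; since $\F_2^{k+1}\setminus(H_{\mathbf a_0}\cup\{\mathbf 0\})$ has exactly $2^k$ vectors, the columns of $G$ are precisely those vectors, each once. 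After a change of basis on $\F_2^{k+1}$ taking $\mathbf a_0$ to $(1,0,\ldots,0)$ followed by a column permutation, $G$ becomes $R(1,k)$, so $C$ is permutation-equivalent to $\mathcal R(1,k)$, hence even-like and contains $\mathbf 1$. Feeding this into Theorem~\ref{thm-2optimalLCP} gives $d_{LCP}(2^k,k+1)=d_L(2^k,k+1)-1=2^{k-1}-1$.

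I expect the main obstacle to be precisely this uniqueness statement for $\mathcal R(1,k)$ — getting the two moment identities right and extracting the ``complement of a hyperplane'' structure from the nonnegativity inequality; the Griesmer estimates, the weight enumerator of $\mathcal R(1,k)$, and the appeal to Theorem~\ref{thm-2optimalLCP} are routine. If one prefers to quote it, the uniqueness of the binary $[2^m,m+1,2^{m-1}]$ code is classical and would shortcut this step, but I would rather keep the argument self-contained.
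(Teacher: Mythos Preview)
Your plan matches the paper's proof: invoke Theorem~\ref{thm-2optimalLCP} after establishing that every binary $[2^k,k+1,2^{k-1}]$ code is equivalent to $\mathcal{R}(1,k)$, hence even-like and containing $\mathbf{1}$. The only difference is that the paper simply quotes this uniqueness from \cite{Tor}, whereas you supply a self-contained moment/projective argument for it --- a step you yourself note at the end can be shortcut by citation.
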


\begin{proof}
From \cite{Tor}, any binary linear $[2^k,k+1,2^{k-1}]$ code $C$ is equivalent to the first order Reed-Muller code $\mathcal{R}(1,k)$. Hence $C$ contains the all-one vector ${\bf 1}$. By Theorem \ref{thm-2optimalLCP}, $d_{LCP}(n,k+1)=d_{L}(n,k+1)-1=2^{k-1}-1.$
\end{proof}

Guenda $et~al.$ \cite{2020DCC-lIP} introduced the concept of linear $\ell$-intersection pairs of codes, and proposed a conjecture on the existence of linear $\ell$-intersection pairs of codes as follows.

\begin{conjecture}{\rm \cite[Conjecture 2.1]{2020DCC-lIP}}
There exists a linear $\ell$-intersection pair of $[n,k_1,d_1]_q$ and $[n,k_2,d_2]_q$ codes for every $\ell$ satisfying $k_1+k_2-n\leq \ell\leq \min\{k_1,k_2\}$ provided that there exist $[n,k_1,d_1]_q$ and $[n,k_2,d_2]_q$ codes.
\end{conjecture}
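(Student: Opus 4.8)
The plan is to \emph{refute} this conjecture in the binary case, so instead of proving it I would exhibit a single admissible value of $\ell$ --- namely $\ell=0$ --- together with parameters for which a linear $0$-intersection pair provably cannot exist, even though codes with the prescribed parameters do. A linear $0$-intersection pair $(C_1,C_2)$ is just a pair with $C_1\cap C_2=\{{\bf 0}\}$, so the obstruction I want is a common nonzero codeword that is forced into every code of the prescribed parameters; the family of lengths $n=2^k$ appearing in Corollary~\ref{cor-LCP-1} is precisely where this kind of rigidity is available.

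Explicitly, I would take $q=2$, $n=2^k$ with $k\geq 3$, and $k_1=k_2=k+1$, $d_1=d_2=2^{k-1}$. First I would note that codes with these parameters exist, since the first-order Reed--Muller code $\mathcal{R}(1,k)$ is a binary $[2^k,k+1,2^{k-1}]$ code (see Section~2). Second I would check that $\ell=0$ is admissible for the conjecture: $k_1+k_2-n=2(k+1)-2^k\leq 0\leq k+1=\min\{k_1,k_2\}$ whenever $k\geq 3$, so the conjecture asserts the existence of a linear $0$-intersection pair of two binary $[2^k,k+1,2^{k-1}]$ codes. Third --- the key step --- I would invoke the classification used in the proof of Corollary~\ref{cor-LCP-1}: every binary $[2^k,k+1,2^{k-1}]$ code is (coordinate-permutation) equivalent to $\mathcal{R}(1,k)$, and hence contains ${\bf 1}$, because $\mathcal{R}(1,k)$ contains ${\bf 1}$ and ${\bf 1}$ is fixed by every coordinate permutation. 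Therefore any two binary codes $C_1,C_2$ with parameters $[2^k,k+1,2^{k-1}]$ satisfy ${\bf 1}\in C_1\cap C_2$, so $\dim(C_1\cap C_2)\geq 1$ and $(C_1,C_2)$ is never a $0$-intersection pair. This contradicts the conjecture; already the smallest instance $n=8$, $k_1=k_2=4$, $d_1=d_2=4$, $\ell=0$ (where $[8,4,4]$ codes exist and $0$ lies in the admissible range $[0,4]$) is a concrete counterexample.

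The same obstruction can also be phrased through the security-parameter results of Section~4 (using the uniqueness of the extended Hamming code): when $k_1+k_2=n$ a $0$-intersection pair is an LCP, so a $0$-intersection pair of a $[2^k,k+1,2^{k-1}]$ code with a $[2^k,2^k-k-1,4]$ code would be a binary $[2^k,k+1]$ LCP of security parameter $d_L(2^k,k+1)$, contradicting $d_{LCP}(2^k,k+1)=d_L(2^k,k+1)-1$ from Corollary~\ref{cor-LCP-1}. There is no genuinely hard computational step here; the only points needing care are the admissible-range bookkeeping --- one must verify that $\ell=0$ really satisfies $k_1+k_2-n\leq 0\leq\min\{k_1,k_2\}$ for the chosen parameters, so that the conjecture truly predicts a $0$-intersection pair --- and the notion of equivalence in the cited classification, which is coordinate-permutation equivalence, under which containment of ${\bf 1}$ is preserved, so that the uniqueness of $\mathcal{R}(1,k)$ indeed forces every code with these parameters to contain ${\bf 1}$.
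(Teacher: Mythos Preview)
Your proposal is correct. The paper itself does not give a detailed argument: it simply remarks that Corollary~\ref{cor-LCP-1} implies the conjecture fails for $q=2$, which is exactly your second phrasing via LCPs (a $0$-intersection pair of a $[2^k,k+1,2^{k-1}]$ code and a $[2^k,2^k-k-1,4]$ code would be an $[2^k,k+1]$ LCP of security parameter $d_L(2^k,k+1)$, contradicting $d_{LCP}(2^k,k+1)=d_L(2^k,k+1)-1$).

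Your primary framing---taking $k_1=k_2=k+1$, $d_1=d_2=2^{k-1}$ and observing directly that both codes must contain ${\bf 1}$---is a slightly different packaging of the same key fact (every binary $[2^k,k+1,2^{k-1}]$ code is permutation-equivalent to $\mathcal{R}(1,k)$ and hence contains ${\bf 1}$). It is arguably more transparent, since it bypasses the security-parameter language and the duality with the extended Hamming code; the paper's route, on the other hand, ties the counterexample neatly to the main theme $d_{LCP}(n,k)=d_L(n,k)-1$. Both rest on the same classification input, and your bookkeeping (that $\ell=0$ lies in the admissible range precisely when $k\geq 3$) is correct.
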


\begin{remark}
Guenda $et~al.$ \cite{2020DCC-lIP} and Huang $et~al.$ \cite{HFF2022l-intesection222} showed that this conjecture holds for maximum distance separable (MDS) codes over $\F_q$ of length less than or equal to $q +1$, where $q>2$. In fact, Corollary \ref{cor-LCP-1} implies that this conjecture does not always hold for the binary case.
\end{remark}

\section{Conclusion}
In this paper, we have pushed further the study of the characterization of binary LCPs of codes. We have characterized the existence of binary LCPs of codes. As a consequence, an open problem proposed by Carlet {\em et al.} in \cite{Sigam-LCD} has been solved. We have also settled the best parameter problem of binary LCPs of codes. As a result, a conjecture proposed by Choi {\em et al.} in \cite{2023CCDS-LCP} has been solved. Further, we have constructed an infinite family of optimal binary LCPs of codes with more general parameters, which cover the parameters constructed by G${\rm\ddot{u}}$neri in \cite{2023ITSSLCP}. Further, these results also imply that the conjecture proposed by Guenda $et~al.$ in \cite{2020DCC-lIP} does not always hold for the binary case.\vspace{3mm}

\noindent{\bf Acknowledgement:} The research of Shitao Li and Minjia Shi is supported by the National Natural Science Foundation of China under Grant 12071001. The research of San Ling is supported by Nanyang Technological University Research Grant 04INS000047C230GRT01.



\end{sloppypar}

\end{document}